\documentclass[10pt]{extarticle} 

\usepackage[a4paper,margin=2cm]{geometry} 


\usepackage{newpxtext,newpxmath} 

\usepackage{newpxmath}
\usepackage[T1]{fontenc}
\usepackage[utf8]{inputenc}

\setlength{\parskip}{0.5em}
\setlength{\parindent}{0pt} 

\usepackage{breqn} 
\usepackage{empheq} 

\usepackage[colorlinks=true,linkcolor=blue,citecolor=blue,urlcolor=blue]{hyperref}

\usepackage{titlesec}
\titleformat{\section}[block]{\large\bfseries}{\thesection.}{0.5em}{}
\titleformat{\subsection}[block]{\normalsize\bfseries}{\thesubsection.}{0.5em}{}

\usepackage{enumitem}
\setlist{nosep,leftmargin=*}

\usepackage{amsmath, amssymb}
\usepackage{caption}
\usepackage{subcaption}
\usepackage[ruled,linesnumbered,noend]{algorithm2e} 
\usepackage{mathtools}
\usepackage{bbm}
\usepackage{enumitem}
\usepackage{amsthm}
\usepackage{thmtools}
\usepackage{thm-restate}
\usepackage{multirow}
\usepackage{float}
\usepackage[utf8]{inputenc} 
\usepackage[T1]{fontenc}    
\usepackage{hyperref}       
\usepackage{url}            
\usepackage{booktabs}       
\usepackage{amsfonts}       
\usepackage{nicefrac}       
\usepackage{microtype}      
\usepackage{lipsum}
\usepackage{fancyhdr}       
\usepackage{graphicx}       
\graphicspath{{media/}}     

\usepackage{tcolorbox} 

\newtcolorbox[auto counter, number within=section]{examplebox}[1][]{
  colback=blue!5!white, 
  colframe=blue!75!black, 
  fonttitle=\bfseries,
  title=Example~\thetcbcounter, 
  floatplacement=t, 
  #1
}

\newtheorem{theorem}{Theorem}
\newtheorem{lemma}{Lemma}
\newtheorem{definition}{Definition}

\pagestyle{fancy}
\thispagestyle{empty}
\rhead{ \textit{ }} 

\fancyhead[LO]{Private Synthetic Data Generation in Bounded Memory}

\title{Private Synthetic Data Generation in Bounded Memory
}

\author{
  Rayne Holland\thanks{Corresponding author email: rayne.holland@data61.csiro.au.}, Seyit Camtepe, Chandra Thapa, Minhui Xue \\
  CSIRO's Data61 \\
}

\newcommand{\privhp}{\mathtt{PrivHP}}
\newcommand{\pmm}{\mathtt{PMM}}
\newcommand{\superreg}{\mathtt{SRRW}}
\newcommand{\smooth}{\mathtt{Smooth}}
\newcommand{\privtree}{\mathtt{PrivTree}}

\newcommand{\node}{v_{\theta}}
\newcommand{\nodecount}{v.\mathtt{count}}
\newcommand{\counter}{\mathtt{count}}
\newcommand{\subdomain}{\Omega_{\theta}}
\newcommand{\construct}{\mathtt{PrivHP}}
\newcommand{\grow}{\mathtt{GrowPartition}}
\newcommand{\enforce}{\mathtt{EnforceConsistency}}

\newcommand{\laplacedist}{\mathtt{Laplace}}

\newcommand{\diam}{\mathtt{diam}}
\newcommand{\miss}{\mathtt{ConsErr}}
\newcommand{\err}{\mathtt{TotErr}}
\newcommand{\cost}{\mathtt{cost}}
\newcommand{\level}{\mathtt{level}}

\newcommand{\update}{\mathtt{update}}
\newcommand{\query}{\mathtt{query}}
\newcommand{\empiricalx}{\mu_\mathcal{X}}

\newcommand{\treehp}{\mathcal{T}_{\mathtt{PrivHP}}}
\newcommand{\tree}{\mathcal{T}}
\newcommand{\treex}{\mathcal{T}_{\mathcal{X}}}
\newcommand{\treeprune}{\mathcal{T}_{\mathtt{approx}}}
\newcommand{\treetop}{\mathcal{T}_{\mathtt{exact}}}

\newcommand{\tail}{\mathtt{tail}}

\newcommand{\sketch}{\mathtt{sketch}}

\newcommand{\noise}{\Delta_{\mathtt{noise}}}

\newcommand{\approximation}{\Delta_{\mathtt{approx}}}

\newcommand{\lp}{L_{\star}}

\begin{document}
\maketitle

\begin{abstract}
Protecting sensitive information on data streams is a pivotal challenge for modern systems.
Current approaches to providing privacy in data streams can be broadly categorized into two strategies.
The first strategy involves transforming the stream into a private sequence of values, enabling the subsequent use of non-private methods of analysis.
While effective, this approach incurs high memory costs, often proportional to the size of the database.
Alternatively, a compact data structure can be used to provide a private summary of the stream.
However, these data structures are limited to predefined queries, restricting their flexibility.

To overcome these limitations, we propose a lightweight synthetic data generator, $\privhp$, that provides differential privacy guarantees. 
$\privhp$ is based on a novel method for the private hierarchical decomposition of the input domain in bounded memory.
As the decomposition approximates the cumulative distribution function of the input, it serves as a lightweight structure for synthetic data generation.
$\privhp$ is the first method to provide a principled trade-off between accuracy and space for private hierarchical decompositions.
It achieves this by balancing hierarchy depth, noise addition, and   selective pruning of low-frequency subdomains while preserving high-frequency ones, all identified in a privacy-preserving manner.
To ensure memory efficiency, we employ private sketches to estimate subdomain frequencies without accessing the entire dataset. 

Central to our approach is the introduction of a pruning parameter \( k \), which enables an almost smooth interpolation between space usage and utility, and a measure of skew \( \tail_k \), which is a vector of subdomain frequencies containing all but the largest \( k \) coordinates.  
\(\privhp\) processes a dataset \( \mathcal{X} \) using \( M = \mathcal{O}(k \log^2 |\mathcal{X}|) \) space and, on input domain \( \Omega = [0,1]^{d} \), while maintaining \( \varepsilon \)-differential privacy, produces a synthetic data generator that is at distance  
\[
\mathcal{O}\left(\frac{M^{(1-\frac{1}{d})}}{\varepsilon n} + \frac{||\tail_k(\mathcal{X})||_1}{M^{1/d}n} \right) 
\]
from the empirical distribution in the expected Wasserstein metric.  
Compared to the state-of-the-art, $\pmm$, which achieves accuracy \( \mathcal{O}((\varepsilon n)^{-1/d}) \) with memory \( \mathcal{O}(\varepsilon n) \), our method introduces an additional approximation error term of  \( \mathcal{O}(||\tail_k(\mathcal{X})||_1/(M^{1/d}n)) \), but operates in significantly reduced space.  
Additionally, we provide interpretable utility bounds that account for all error sources, including those introduced by the fixed hierarchy depth, privacy noise, hierarchy pruning, and frequency approximations.
\end{abstract}

\section{Introduction}
\label{sec:introduction}

A data stream solution addresses the problem of analyzing large volumes of data with limited resources.
Numerous techniques have been developed to enable real-time analytics  with minimal memory usage and high throughput~\cite{cormode2013unifying, jowhari2011tight, monemizadeh2010}. 
However, if the stream contains sensitive information, privacy concerns become paramount \cite{epasto2023differentially, wang2021continuous}. 
In such cases, a data stream solution must optimize resource efficiency while balancing utility and strong privacy protection.

The concept of \textit{differential privacy} has emerged as the prevailing standard for ensuring privacy in the analysis of data streams. 
It guarantees that an observer analyzing the outputs of a differentially private algorithm is fundamentally limited, in an information-theoretic sense, in their ability to infer the presence or absence of any individual data point in the stream.
Methods for supporting differentially private queries on data streams can be broadly categorized into two strategies. 
The first strategy transforms the data stream into a differentially private sequence of values or statistics, which can subsequently be processed and queried by non-private data structures \cite{chen2017pegasus, kumar2024privstream, perrier2018private, wang2021continuous}.  
However, these approaches can incur high memory costs, limiting their applicability in many streaming contexts.
The second approach involves constructing specialized data structures, with small memory allocations, that provide differentially private answers to specific, predefined queries  \cite{biswas2024differentially, lebeda2023better, zhao2022differentially}. 
Although this approach is memory efficient, it restricts the range of queries to those selected in advance, reducing flexibility.

To address these limitations, we propose a lightweight synthetic data generator, $\privhp$ (Private Hot Partition), that provides differential privacy and operates with high throughput and in bounded memory. 
Our generator produces private synthetic data that approximates the distribution of the original data stream.
This synthetic data can be used for any downstream task without additional privacy costs.
Therefore, $\privhp$ both protects sensitive information in bounded memory \textit{and} supports a broad range of queries in resource-constrained environments.

\subsection{Problem and Solution}

Mathematically, the problem of generating synthetic data can be defined as follows.
Let $(\Omega, \rho)$ be a metric space and consider a stream $\mathcal{X}=(X_1, \ldots, X_n) \in \Omega^n$.
Our goal is to construct a space and time-efficient randomized algorithm that outputs synthetic data $\mathcal{Y}=(Y_1,\ldots, Y_m) \in \Omega^m$, such that the two empirical measures
\begin{align*}
    \mu_{\mathcal{X}} = \frac{1}{n}{\sum_{i=1}^n} \delta_{X_i} \quad \text{and} \quad \mu_{\mathcal{Y}} = \frac{1}{m}{\sum_{i=1}^m} \delta_{Y_i}
\end{align*}
are close together.
Moreover, the output $\mathcal{Y}$ should satisfy differential privacy.

Our approach adapts recent advancements in private synthetic data generation (in a non-streaming setting) that utilizes a \textit{hierarchical decomposition} to partition the sample space \cite{he2023algorithmically, he2023differentially}. 
At a high level, these methods work in the following way:
\begin{enumerate}
    \item Hierarchically split the sample space into smaller subdomains;
    \item For each subdomain, count how often items from the dataset appear within it;
    \item To ensure privacy, add a carefully chosen amount of random noise to each frequency count;
    \item Use these noisy counts to construct a probability distribution so that items from each subdomain can be sampled with a probability proportional to their noisy frequency. 
\end{enumerate}
The accuracy of this approach depends on the granularity of the partition.
Creating more subdomains of smaller diameter can improve the approximation of the real distribution, but requires more memory. 
Therefore, the challenge of constructing a high-fidelity partition on a stream is this balance between granularity and memory.

To meet this challenge, $\privhp$ adopts a new method for hierarchical decomposition that prunes less significant subdomains while prioritizing subdomains with a high frequency of items. 
In addition, we use sketching techniques to approximate frequencies at deeper levels in the hierarchy, allowing for effective pruning without requiring access to the entire dataset.
Similar to He \textit{et al.} \cite{he2023algorithmically}, to maintain privacy, we perturb the frequency counts of nodes in the (pruned) hierarchy.

\subsection{Main Result}

We measure the utility of a synthetic data generator $\tree$ by $\mathbb{E}[W_1(\mu_{\mathcal{X}}, \tree)]$, where $W_1(\mu_{\mathcal{X}}, \tree)$ is the 1-Wasserstein metric, and $\mathbb{E}$ is taken over the randomness of the algorithm generating $\tree$.
Our results explore trade-offs between utility and performance.
We are interested in quantifying the cost, in utility, of supporting synthetic data generation under resource constraints. 
To quantify the cost of pruning we introduce the vector $\tail_k^r$, which is the vector of subdomain frequencies, at level $r$ in the hierarchy, with the highest $k$ coordinates set to 0.
The norm $||\tail_k^r||_1$ is small for skewed inputs and can even be zero on sparse inputs.
Our utility bound, expressed as a function of the memory allocation,
is formalized in the following result on the hypercube~$\Omega = [0,1]^d$.

\begin{theorem}
    When $\Omega = [0,1]^d$, for pruning parameter $k$, $\privhp$ can process a stream $\mathcal{X}$ of size $n$ in $M=\mathcal{O}(k\log^2(n))$ memory and $\mathcal{O}(\log (\varepsilon n))$ update time.
    $\privhp$ can subsequently output a $\varepsilon$-differentially private synthetic data generator $\treehp$, in $\mathcal{O}(M \log n)$ time, such that
    \begin{align*}
        \mathbb{E}[W_1(\mu_{\mathcal{X}}, \treehp )] = \begin{cases}
             \mathcal{O}\left( \frac{\log^2(M)}{\varepsilon n} + M^{-1}\cdot \frac{||\tail_k^{\varepsilon n}(\mathcal{X})||_1}{n} \right) & if d = 1  
            \\
            \mathcal{O}\left(\frac{M^{(1-\frac{1}{d})}}{\varepsilon n}+{M^{-\frac{1}{d}}}\cdot \frac{||\tail_k^{\varepsilon n}(\mathcal{X})||_1}{n}\right) & if d\geq 2 
    \end{cases},
    \end{align*}
    \label{thm:introduction}
\end{theorem}
\noindent
Our work makes several key contributions. 
First, we present a new method for private hierarchical decomposition that provides the first known trade-off between accuracy and space. 
Our approach achieves this by carefully balancing hierarchical depth, noise addition, and pruning, allowing for efficient representations without sacrificing utility. 
The key innovation is the introduction of the pruning parameter $k$,
which controls the accuracy-space trade-off and provides a level of flexibility unavailable in prior work.

A central contribution of our work is the accuracy analysis. 
The key challenge lies in understanding how error (from both privacy noise and frequency approximation) propagates down the hierarchy and affects pruning decisions.
We formalize how this error is propagated and provide interpretable utility bounds (Theorem~\ref{thm:utilityHP}) that account for the different sources of error.
These bounds apply to \textit{any} metric space as an input domain. 
Thus, our techniques are applicable to a broad range of domains, such as geographic coordinates or the IPv4 address space.

Leveraging the constructed hierarchy, we introduce the first private synthetic data generator with provable space-utility trade-offs for any metric space. 
This enables the generation of high-fidelity, privacy-preserving synthetic datasets, in bounded memory, that are suitable for various downstream data analysis tasks and applications. 
A comprehensive comparison with existing methods for private synthetic data generation is provided in Table~\ref{tab:performance_results}.

\subsection{Organization of the Paper}
As background, Section~\ref{sec:related_work} covers related work and Section~\ref{sec:preliminaries} establishes the relevant preliminaries.
Section~\ref{sec:compact_hd} introduces our method for compact hierarchical decompositions on the stream.
Section~\ref{sec:private_synthetic} covers synthetic data generation using hierarchical partitions and supplies the main results.
Section~\ref{sec:m_utility} introduces our method for measuring utility. 
Lastly, Section~\ref{sec:proof} contains the proof of our utility bound for general input domains.

\section{Related Work}
\label{sec:related_work}

\begin{table}[]
    \centering
    \begin{tabular}{|c||c|c|c|} \hline
        \multirow{2}{*}{\textbf{Method}}
                & \multicolumn{2}{c|}{\textbf{Accuracy}} 
                    & \multirow{2}{*}{\textbf{Memory}} \\ \cline{2-3}
                & $\Omega = [0,1]$ 
                    & $\Omega = [0,1]^d, d\geq 2$ & \\ \hline \hline
        \rule{0pt}{1.6em} $\smooth$ \cite{wang2016differentially}
                & \multicolumn{2}{c|}{$\mathcal{O}\left(\varepsilon^{-1}n^{-K/(2d+K)}\right)$}
                    &  $\mathcal{O}(dn)$ \\ \hline
        \rule{0pt}{1.6em} $\superreg$ \cite{boedihardjo2024private}
                & \multicolumn{2}{c|}{$\mathcal{O}\left(\left({\log^{\frac{3}{2}}(\varepsilon n)}{(\varepsilon n)^{-1}}\right)^{1/d}\right)$}
                    & $\mathcal{O}(dn)$ \\ \hline
        \rule{0pt}{1.6em} $\pmm$ \cite{he2023algorithmically} 
                & $\mathcal{O}\left({\log^2(\varepsilon n)}{(\varepsilon n)^{-1}}\right)$ 
                    & $\mathcal{O}\left((\varepsilon n)^{-1/d}\right)$ 
                        & $\mathcal{O}(\varepsilon n)$ \\ \hline
        \rule{0pt}{1.6em} $\privhp$ 
                & $\mathcal{O}\left( \frac{\log^2(M)}{\varepsilon n} + M^{-1}\cdot \frac{||\tail_k^{\varepsilon n}||_1}{n} \right)$ 
                    & $\mathcal{O}\left(\frac{M^{(1-\frac{1}{d})}}{\varepsilon n}+{M^{-\frac{1}{d}}}\cdot \frac{||\tail_k^{\varepsilon n}||_1}{n}\right)$ 
                        & $M=\mathcal{O}( k\log^2  n)$ \\ \hline
    \end{tabular}
    \caption{Performance of $\privhp$ vs prior work.
    Results are presented for input domains $\Omega=[0,1]$ and $\Omega=[0,1]^d$ ($d\geq 2$).
    Accuracy is measured by the expected 1-Wasserstein distance.
    The utility guarantee for $\smooth$ is restricted to smooth queries with bounded partial derivatives of order $K$.  
    }
    \label{tab:performance_results}
\end{table}

\subsection{Private Hierarchical Decomposition}

Many methods for hierarchical decomposition have been adapted to the context of differential privacy \cite{cormode2012differentially, kumar2024privstream, qardaji2013understanding,yan2020differential,zhang2016privtree}.
Static solutions, such as $\mathtt{PrivTree}$ \cite{zhang2016privtree}, require full access to the dataset and are not suitable for streaming.
The dynamic decomposition introduced by $\mathtt{PrivStream}$ \cite{kumar2024privstream} adapts $\mathtt{PrivTree}$ to the stream.
However, $\mathtt{PrivStream}$ is supported by a fixed and potentially large hierarchical partition of the input domain that must be stored locally with exact counts~\cite{kumar2024privstream}.
Therefore, it incurs large memory costs and is not suited to resource-constrained settings.
We overcome this limitation by adopting pruning and sketching techniques to summarize deeper levels of the hierarchy.
In addition, $\mathtt{PrivStream}$ does not provide any utility guarantees.

Biswas \textit{et al.} introduced a streaming solution for hierarchical heavy hitters \cite{biswas2024differentially} that can support a private hierarchical decomposition on the stream.
One key difference in our approach is the choice of private sketch. 
The hashing-based private sketch \cite{pagh2022improved} employed by $\privhp$  has a better error guarantee than the counter-based sketch \cite{lebeda2023better} used by Biswas \textit{et al.} \cite{biswas2024differentially}.
Further, as the error of the hash-based sketch can be expressed in terms of the tail of the dataset it composes nicely with  hierarchy pruning.

\subsection{Privacy on Data Streams}

A common strategy for supporting privacy on streams is to transform the stream into a differentially private sequence of values \cite{chen2017pegasus, perrier2018private, wang2021continuous}.
While this enables query flexibility, current methods provide no bounds on the memory allocation, which can be proportional to the size of the stream.
This limits their application in resource-constrained environments.
In the traditional data steam model, where memory is sublinear in the size of the database, the current approach for protecting streams is to construct specialized data structures, with small memory allocations, that provide differentially private answers to specific, predefined queries  \cite{biswas2024differentially, lebeda2023better, zhao2022differentially}.
However, these methods lack query flexibility.

Alabi \textit{et al.} provide a method for private quantile estimation in bounded memory \cite{alabi2022bounded}.
A quantile estimator can be used to generate synthetic data approximating the distribution of the input dataset.
This is achieved by sampling a value uniformly in [0,1] and returning the quantile.
However, their method only works for finite and ordered input domains and, thus, does not extend to general metric spaces.

\subsection{Non-Streaming Private Synthetic Data Generation}

The problem of generating private synthetic data from \textit{static} datasets, especially in relation to differential privacy, has been explored in depth. 
The challenge of this problem was demonstrated by Ullman and Vadhan, who showed that, given assumptions on one-way functions, generating private synthetic data for all two-dimensional marginals is NP-hard on the Boolean cube \cite{ullman2011pcps}. 
A subsequent portion of research has since focused on guaranteeing privacy for specific query sets \cite{barak2007privacy, boedihardjo2023covariance, dwork2015efficient, liu2021iterative, thaler2012faster, vietri2022private}. 

The utility for private synthetic data is measured by the expected 1-Wasserstein distance.
Wang \textit{et al.}~\cite{wang2016differentially} addressed private synthetic data generation on the hypercube $[0, 1]^d$. They introduced a method ($\smooth$) for generating synthetic data that comes with a utility guarantee for smooth queries with bounded partial derivatives of order $K$, achieving accuracy $\mathcal{O}(\varepsilon^{-1}n^{-K/(2d+K)})$.
More recently, Boedihardjo \textit{et al.} \cite{boedihardjo2024private} proved an accuracy lower bound of $\mathcal{O}(n^{-1/d})$.
They also introduced an approach ($\superreg$) based on super-regular random walks with near-optimal utility of $\mathcal{O}(\log^{3/2}(\varepsilon n)(\varepsilon n)^{-1/d})$. 
Subsequently, He \textit{et al.}~\cite{he2023algorithmically} proposed an approach $\pmm$, based on hierarchical decomposition, that achieves optimal accuracy (up to constant factors) for $d\geq 2$.
These methods provide a combination of privacy and provable utility.
However, they do not consider resource constraints.
In contrast, our approach provides meaningful trade-offs between utility and resources.
A summary of accuracy vs. memory for prior work is presented in Table~\ref{tab:performance_results}.

\section{Preliminaries}
\label{sec:preliminaries}
\subsection{Privacy}

Differential privacy ensures that the inclusion or exclusion of any individual in a dataset has a minimal and bounded impact on the output of a mechanism.
Two streams $\mathcal{X}$ and $\mathcal{X}^{\prime}$ are \textit{neighboring}, 
denoted $\mathcal{X} \sim \mathcal{X}^{\prime}$,
if they differ in one element.
Formally, $\mathcal{X} \sim \mathcal{X}^{\prime}$ if there exists a unique $i$ such that
$x_i \neq x_i^{\prime}$.
The following definition of differential privacy is adapted from Dwork and Roth~\cite{dwork2014algorithmic}.
\begin{definition}[{Differential Privacy -- 1-Pass}] A randomized mechanism $\mathcal{M}$ satisfies $\varepsilon$-differential privacy in a 1\emph{-pass} setting if and only if, for all pairs of neighboring streams $\mathcal{X} \sim \mathcal{X}^{\prime} \in \Omega^*$ and all measurable sets of outputs $Z \subseteq \mathtt{support}(\mathcal{M})$, it holds that 
\[ 
\textup{\textsf{Pr}}[\mathcal{M}(\mathcal{X}) \in Z] \leq e^{\varepsilon} \, \textup{\textsf{Pr}}[\mathcal{M}(\mathcal{X}^{\prime}) \in Z].
\]
\end{definition}
\noindent
This states that the output after the stream is processed is differentially private.
This is in contrast to continual observation, where the output is published after each stream update.
Our focus is on the 1-pass model, but our method can be adapted to continual observation by replacing the counters and sketches with their continual observation counterparts.

The Laplace mechanism is a fundamental technique for ensuring differential privacy by adding noise calibrated to a function's sensitivity.
Let $\triangle_p(f) = \max_{\mathcal{X} \sim \mathcal{X}^{\prime}} \lVert f(\mathcal{X}) - f(\mathcal{X}^{\prime}) \rVert_p$ denote the  $p$-sensitivity of the function $f$.
\begin{lemma}[Laplace Mechanism]
Let \( f \) be a function with \( L_1 \)-sensitivity \( \triangle_1(f) \). The mechanism
\[
\mathcal{M}(\mathcal{X}) = f(\mathcal{X}) + \laplacedist\left(\frac{\triangle_1(f)}{\varepsilon}\right),
\]
satisfies \(\varepsilon\)-differential privacy, where \(\laplacedist\) is a Laplace distribution with mean 0 and scale parameter \(\frac{\triangle_1(f)}{\varepsilon}\).
\label{lem:laplace_mechanism}
\end{lemma}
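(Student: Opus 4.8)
The plan is the classical density-ratio argument of Dwork and Roth. Fix an arbitrary pair of neighboring streams $\mathcal{X} \sim \mathcal{X}^{\prime}$ and write $b = \triangle_1(f)/\varepsilon$ for the Laplace scale parameter. If $\triangle_1(f) = 0$ then $f(\mathcal{X}) = f(\mathcal{X}^{\prime})$ for every neighboring pair and the two output distributions coincide, making the claim immediate; so assume henceforth $\triangle_1(f) > 0$. Suppose $f$ maps into $\mathbb{R}^t$, so that $\mathcal{M}$ perturbs each of the $t$ coordinates of $f$ by an independent $\laplacedist(b)$ variate. Then $\mathcal{M}(\mathcal{X})$ has density
\[
p_{\mathcal{X}}(z) = \prod_{i=1}^{t} \frac{1}{2b}\exp\!\left(-\frac{|z_i - f(\mathcal{X})_i|}{b}\right), \qquad z \in \mathbb{R}^t,
\]
and likewise for $p_{\mathcal{X}^{\prime}}$.

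The key step is to bound the pointwise likelihood ratio. Dividing the two product densities, the constants $\tfrac{1}{2b}$ cancel and
\[
\frac{p_{\mathcal{X}}(z)}{p_{\mathcal{X}^{\prime}}(z)} = \exp\!\left(\frac{1}{b}\sum_{i=1}^{t}\bigl(|z_i - f(\mathcal{X}^{\prime})_i| - |z_i - f(\mathcal{X})_i|\bigr)\right).
\]
By the reverse triangle inequality, $|z_i - f(\mathcal{X}^{\prime})_i| - |z_i - f(\mathcal{X})_i| \le |f(\mathcal{X})_i - f(\mathcal{X}^{\prime})_i|$ for each $i$, so the exponent is at most $b^{-1}\lVert f(\mathcal{X}) - f(\mathcal{X}^{\prime})\rVert_1 \le b^{-1}\triangle_1(f) = \varepsilon$, using the definition of $L_1$-sensitivity and the choice of $b$. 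Hence $p_{\mathcal{X}}(z) \le e^{\varepsilon} p_{\mathcal{X}^{\prime}}(z)$ for every $z$.

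Finally I would integrate over an arbitrary measurable set $Z \subseteq \mathbb{R}^t$:
\[
\textsf{Pr}[\mathcal{M}(\mathcal{X}) \in Z] = \int_Z p_{\mathcal{X}}(z)\,dz \le e^{\varepsilon}\int_Z p_{\mathcal{X}^{\prime}}(z)\,dz = e^{\varepsilon}\,\textsf{Pr}[\mathcal{M}(\mathcal{X}^{\prime}) \in Z],
\]
which is precisely the $\varepsilon$-differential privacy condition. There is no substantive obstacle: the only points needing care are treating the multivariate noise as a product of independent one-dimensional Laplace densities and applying the reverse triangle inequality coordinatewise (rather than crudely bounding a difference of absolute values), together with the degenerate $\triangle_1(f) = 0$ case dispatched separately above.
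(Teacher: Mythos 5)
Your proof is correct and is exactly the classical density-ratio argument from Dwork and Roth that the paper cites for this lemma (the paper itself states it as a known background result without proof). The coordinatewise reverse-triangle-inequality bound on the likelihood ratio, followed by integration over an arbitrary measurable set, is the canonical argument, and your separate handling of the degenerate $\triangle_1(f)=0$ case is a reasonable extra care.
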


A significant property of differential privacy is that it is invariant under post-processing. 
This means that applying any deterministic or randomized function to the output of an $\varepsilon$-differentially private mechanism does not decrease its privacy guarantee. 

\begin{lemma}[Post-Processing]
If $\mathcal{M}$ is an $\varepsilon$-differentially private mechanism and $g$ is any randomized mapping, then $g \circ \mathcal{M}$ is also $\varepsilon$-differentially private.
\label{lem:post_processing}
\end{lemma}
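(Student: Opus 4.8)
The plan is to reduce the randomized post-processing map to a mixture of deterministic maps and then apply the differential privacy guarantee of $\mathcal{M}$ in a pointwise fashion. First I would model the internal randomness of $g$ explicitly: introduce a random seed $R$ drawn from a distribution $\nu$, independent of the randomness used by $\mathcal{M}$, so that for each fixed value $r$ the map $g_r := g(\,\cdot\,; r)$ is a deterministic measurable function from $\mathtt{support}(\mathcal{M})$ to the output space. This is the standard way of viewing a randomized kernel as an average of deterministic ones, and it is where the (implicit) assumption that $g$'s randomness does not depend on the private input is used.

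Next, for a fixed deterministic $g_r$ and a measurable output set $Z$, I would set $Z_r := g_r^{-1}(Z)$, which is measurable since $g_r$ is measurable. The event $\{g_r(\mathcal{M}(\mathcal{X})) \in Z\}$ coincides with $\{\mathcal{M}(\mathcal{X}) \in Z_r\}$, so applying the $\varepsilon$-differential privacy of $\mathcal{M}$ (Definition~1) to the set $Z_r$ yields $\textsf{Pr}[g_r(\mathcal{M}(\mathcal{X})) \in Z] \le e^{\varepsilon}\,\textsf{Pr}[g_r(\mathcal{M}(\mathcal{X}^{\prime})) \in Z]$ for every neighboring pair $\mathcal{X} \sim \mathcal{X}^{\prime}$.

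Finally I would integrate this inequality over $r \sim \nu$. Because $R$ is independent of the randomness of $\mathcal{M}$ and every integrand is nonnegative, Tonelli's theorem lets me write $\textsf{Pr}[g(\mathcal{M}(\mathcal{X})) \in Z] = \int \textsf{Pr}[g_r(\mathcal{M}(\mathcal{X})) \in Z]\, d\nu(r)$, and likewise for $\mathcal{X}^{\prime}$. Monotonicity of the integral together with the pointwise bound from the previous step gives $\textsf{Pr}[g(\mathcal{M}(\mathcal{X})) \in Z] \le e^{\varepsilon} \int \textsf{Pr}[g_r(\mathcal{M}(\mathcal{X}^{\prime})) \in Z]\, d\nu(r) = e^{\varepsilon}\,\textsf{Pr}[g(\mathcal{M}(\mathcal{X}^{\prime})) \in Z]$. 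Since $Z$ and the neighboring pair were arbitrary, $g \circ \mathcal{M}$ is $\varepsilon$-differentially private.

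I do not expect a genuine obstacle here; the result is standard and requires no quantitative estimate. The only point demanding care is the measure-theoretic bookkeeping: making the decomposition of $g$ into a $\nu$-mixture of deterministic kernels rigorous, checking measurability of the preimages $g_r^{-1}(Z)$, and confirming that the independence of the two randomness sources justifies interchanging probability and integration over $r$. If one prefers to avoid seeds entirely, the same argument can be phrased directly in terms of the transition kernel of $g$ via Tonelli, which is the form I would use if the ambient spaces are not assumed to be standard Borel.
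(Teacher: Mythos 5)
Your proof is correct, and it is the standard argument (essentially Proposition 2.1 in Dwork and Roth's monograph): first observe that the claim is immediate for a deterministic $g$, since for any measurable $Z$ the event $\{g(\mathcal{M}(\mathcal{X}))\in Z\}$ equals $\{\mathcal{M}(\mathcal{X})\in g^{-1}(Z)\}$ and the definition of $\varepsilon$-DP applies directly to the preimage; then handle a randomized $g$ by conditioning on its internal coin flips, which (being independent of the data and of $\mathcal{M}$'s randomness) expresses $g$ as a mixture of deterministic maps, and integrate the pointwise inequality. The paper does not give a proof of this lemma at all — it states it as a known background fact and cites Dwork and Roth for the surrounding material — so there is nothing to compare against; your write-up supplies exactly the argument the paper is implicitly invoking. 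The one place where your phrasing is slightly stronger than needed is the Tonelli invocation: since all quantities are probabilities (bounded by $1$) and the inequality is pointwise in $r$, plain monotonicity of the integral already closes the argument; Tonelli is only needed if you want to justify the disintegration $\textsf{Pr}[g(\mathcal{M}(\mathcal{X}))\in Z]=\int \textsf{Pr}[g_r(\mathcal{M}(\mathcal{X}))\in Z]\,d\nu(r)$ itself, which, as you note, rests precisely on the independence of the seed from the mechanism's output.
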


Differential privacy is also preserved under composition. 
When multiple differentially private mechanisms are applied to the same data, the total privacy loss accumulates.
The composition property quantifies this cumulative privacy loss, providing bounds for combining mechanisms.

\begin{lemma}[Basic Composition]
If $\mathcal{M}_1$ and $\mathcal{M}_2$ are $\varepsilon_1$- and $\varepsilon_2$-differentially private mechanisms, respectively, then the mechanism defined by their joint application, $(\mathcal{M}_1, \mathcal{M}_2)$, is $(\varepsilon_1 + \varepsilon_2)$-differentially private.
\label{lem:composition}
\end{lemma}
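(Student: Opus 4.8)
The plan is to reduce the joint mechanism to a pointwise statement about output probabilities (or densities) and then integrate, using only Definition~1 and the fact that the joint application runs $\mathcal{M}_1$ and $\mathcal{M}_2$ on independent coins. Fix neighboring streams $\mathcal{X} \sim \mathcal{X}^{\prime} \in \Omega^*$ and a measurable set $Z \subseteq \mathtt{support}(\mathcal{M}_1) \times \mathtt{support}(\mathcal{M}_2)$. I would first dispose of the discrete case: by independence, for every outcome $(z_1, z_2)$ we have $\Pr[(\mathcal{M}_1(\mathcal{X}), \mathcal{M}_2(\mathcal{X})) = (z_1, z_2)] = \Pr[\mathcal{M}_1(\mathcal{X}) = z_1]\cdot \Pr[\mathcal{M}_2(\mathcal{X}) = z_2]$. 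Applying the $\varepsilon_1$-DP guarantee of $\mathcal{M}_1$ to the singleton $\{z_1\}$ and the $\varepsilon_2$-DP guarantee of $\mathcal{M}_2$ to $\{z_2\}$ and multiplying the two inequalities yields the pointwise bound $\Pr[(\mathcal{M}_1(\mathcal{X}), \mathcal{M}_2(\mathcal{X})) = (z_1, z_2)] \le e^{\varepsilon_1 + \varepsilon_2}\,\Pr[(\mathcal{M}_1(\mathcal{X}^{\prime}), \mathcal{M}_2(\mathcal{X}^{\prime})) = (z_1, z_2)]$. Summing over $(z_1, z_2) \in Z$ and using monotonicity of the sum gives $\Pr[(\mathcal{M}_1, \mathcal{M}_2)(\mathcal{X}) \in Z] \le e^{\varepsilon_1 + \varepsilon_2}\Pr[(\mathcal{M}_1, \mathcal{M}_2)(\mathcal{X}^{\prime}) \in Z]$, which is exactly $(\varepsilon_1 + \varepsilon_2)$-DP.

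For the general (non-discrete) output spaces, I would run the same argument at the level of densities. Pick a common dominating product measure $\nu_1 \otimes \nu_2$ (for instance, $\nu_i$ the sum of the laws of $\mathcal{M}_i(\mathcal{X})$ and $\mathcal{M}_i(\mathcal{X}^{\prime})$), so that by independence the density of $(\mathcal{M}_1(\mathcal{X}), \mathcal{M}_2(\mathcal{X}))$ with respect to $\nu_1 \otimes \nu_2$ factors as $p_1^{\mathcal{X}}(z_1)\,p_2^{\mathcal{X}}(z_2)$. A standard measure-theoretic fact — if $\int_A f \le e^{\varepsilon}\int_A g$ for every measurable $A$ then $f \le e^{\varepsilon} g$ almost everywhere — upgrades the $\varepsilon_i$-DP inequality for $\mathcal{M}_i$ to the pointwise density bound $p_i^{\mathcal{X}}(z_i) \le e^{\varepsilon_i} p_i^{\mathcal{X}^{\prime}}(z_i)$ for $\nu_i$-a.e.\ $z_i$. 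Multiplying these two bounds and integrating over $Z$ against $\nu_1 \otimes \nu_2$ (valid by Tonelli, all integrands nonnegative) gives the claimed inequality. If the adaptive form is also wanted, where $\mathcal{M}_2$ may read the output of $\mathcal{M}_1$, I would instead condition on $z_1$: for each fixed $z_1$ the map $\mathcal{M}_2(\cdot, z_1)$ is $\varepsilon_2$-DP, so the same density bound holds for the conditional law; multiplying by the $\mathcal{M}_1$-bound on the marginal density of $z_1$ and integrating reproduces the result, and the stated non-adaptive case is the special case where $\mathcal{M}_2$ ignores $z_1$.

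The only real obstacle is the bookkeeping in the general measurable setting: choosing the dominating measure so that the joint density genuinely factors, and justifying the "sets-to-densities" upgrade of the DP inequality; once those are in place, the argument is a one-line application of independence, the two single-mechanism guarantees from Definition~1, and monotonicity/linearity of the integral. No new idea beyond the structure of the joint mechanism is required, which is why this appears as a lemma rather than a theorem.
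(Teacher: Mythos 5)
Your proof is correct: the pointwise factorization via independent coins, the multiplication of the two single-mechanism bounds, and the density-level version for general output spaces together constitute the standard argument for basic composition. The paper does not prove this lemma at all --- it is stated in the preliminaries as a known result from Dwork and Roth --- so there is nothing to compare against beyond noting that your argument is exactly the textbook proof the paper implicitly relies on.
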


\subsection{Utility}

The utility of the output is measured by the expectation of the 1-Wasserstein distance between two measures: 
\begin{align}
   \mathcal{W}_1(\mu_{\mathcal{X}}, \mu_{\mathcal{Y}}) = \sup_{\textsf{Lip}(f)\leq 1} \left( \int f d \mu_{\mathcal{X}} - \int f d \mu_{\mathcal{Y}}   \right),
   \label{eqn:wasserstein}
\end{align}
where the supremum is taken over all 1-Lipschitz functions on $\Omega$.
Since many machine learning algorithms are Lipschitz \cite{meunier2022dynamical,von2004distance}, Equation~\ref{eqn:wasserstein} provides a uniform accuracy guarantee for a wide range of machine learning tasks performed on synthetic datasets whose empirical measure is close to $\mu_{\mathcal{X}}$ in the 1-Wasserstein distance.

\subsection{Sketching}

Our generator relies on a partition of the sample space. 
Prior works, such as $\privtree$ and $\pmm$, construct partitions using \textit{exact} frequency counts, which require access to the full dataset. 
Under the constraint of a sublinear memory allocation, we instead use approximate frequency counts that are {private}. 
We adopt private hash-based sketches~\cite{pagh2022improved,zhao2022differentially} to meet these needs.

A sketch performs a random linear transformation $A$ to embed a vector $v \in \mathbb{R}^{n}$ into a smaller domain $Av \in \mathbb{R}^{j\times w}$.
When the embedding dimension $j\times w$ is small, memory is reduced at the cost of increased error. 
A \textit{private} sketch adds noise to the transformation to make the distribution of the sketch indistinguishable on neighboring inputs. 
Common examples are the Private Count Sketch and Private Count-Min Sketch \cite{pagh2022improved,zhao2022differentially}. 
They share a similar structure but differ in their update and query procedures.
We first overview non-private variants of sketches and then demonstrate how to apply differential privacy.

The Count-Min Sketch \cite{cormode2005improved} is a $j \times w$ matrix of counters, $C(v) \in \mathbb{R}^{j\times w}$, determined by random hash functions $h_1, \ldots, h_j:[n] \rightarrow [w]$, where each $h_i$ maps entries $x \in v$ into a bucket $h_i(x)$ within row $i$.
For $(i, k) \in [j] \times [w]$, each bucket is defined as:
\[
    C(X)_{i,k} = \sum_{x \in v} v_x \cdot \mathbbm{1}(h_i(x)=k),
\]
where $\mathbbm{1}(\xi)$ indicates event $\xi$ and $v_x$ is the count at entry $x$ in $v$. 
Thus, each entry $x$ is added to buckets $(i, h_i(x))$ for all $i \in [j]$, creating $j$ hash tables of size $w$.
The update procedure is visualized in Figure~\ref{fig:cms}.
The estimator $\hat{v}_x = \min \{ C[i][h_i(x)] \mid i \in [j]\}$ combines row estimates by taking the minimum value, filtering out collisions with high-frequency items.
Our results use the following bound on the expected error in a Count-min Sketch, where the influence of high frequency items decays exponentially with the number of rows.
\begin{restatable}{lemma}{lemsketch}
    For input vector $v$, the estimation error of a Count-min Sketch, with width $2w$ and depth $j $, satisfies, $\forall x \in v$, 
    \[
        \mathbb{E}[\hat{v}_x - v_x ] \leq \frac{||\tail_w(v)||_1 + 2^{-j+1} ||v||_1}{w} 
    \]
    where $\tail_w(v)$ is the vector $v$ with the $w$ largest coordinates removed. 
    \label{lem:cms_expected_error}
\end{restatable}
The proof can be found in Appendix~\ref{sec:app_sketch}.
Similar to prior work \cite{pagh2022improved}, our result relies on the use of fully random hash functions.
Significantly, our privacy guarantee does not depend on this assumption.

\begin{figure}
    \centering
    \includegraphics[width=0.5\linewidth]{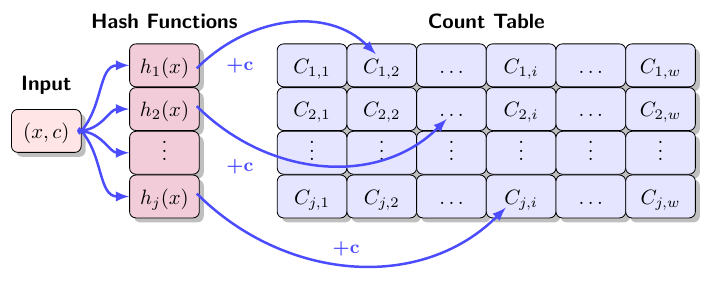}
    \caption{
    The \textsf{update} procedure for a Count-Min Sketch. 
    It follows that~$h_j(x)=i$.
    }
    \label{fig:cms}
\end{figure}

\subsection{Private Release of Sketches}
For private release, sketches must have similar distributions on neighboring vectors. 
In \textit{oblivious} approaches, we sample a random vector $v \in \mathbb{R}^{j \times w}$ independent of the data and release $C(X)+v$. 
The sampling distribution depends on the {sensitivity} of the sketch.
Since sketches are linear, for neighboring inputs $X \sim X^{\prime}$, we have $ C(X) - C(X^{\prime}) = C(X - X^{\prime})$.
As neighboring inputs have sensitivity $\triangle_1(X - X^{\prime}) =1$,
a sketch $C$ has sensitivity proportional to its number of rows. 
Therefore, $\triangle_1(C) =j$ and $C(X)+v$ achieves $\varepsilon$-differential privacy for $v\sim \laplacedist^{j\times w}(j\varepsilon^{-1})$ by Lemma~\ref{lem:laplace_mechanism}.

\section{Private Hierarchical Decomposition in Bounded Memory}
\label{sec:compact_hd}
A hierarchical decomposition recursively splits a sample space into smaller subdomains.
Each point of splitting refers to a \textit{level} in the hierarchy.
Formally, for a binary partition of the sample space $\Omega$, the first level of the hierarchy contains disjoint subsets $\Omega_0, \Omega_1 \subset \Omega$,
such that $\Omega_0 \cup \Omega_1 = \Omega$.
Accordingly, a hierarchical decomposition $\tree$ of depth $L$ is a family of subsets $\Omega_{\theta}$ indexed by $\theta \in \tree \subseteq \{0,1\}^{\leq L}$, where
\[
    \{0,1\}^{\leq L} := \{0,1\}^0 \cup \{0,1\}^1 \cup \cdots \cup \{0,1\}^L.
\]
By convention the cube $\{0,1\}^0 = \varnothing$.
For $\Omega_{\theta}$ with $\theta \in \{0,1\}^l$, we call $l$ the level of $\Omega_{\theta}$.
The leaves of the decomposition $\tree$ form a partition of the sample space. 
To generate synthetic data, any decomposition $\tree$ can be used to form a sampling distribution.
A synthetic point can be constructed by (1) selecting a leaf subdomain $\subdomain \subseteq \Omega$ with probability proportional to its cardinality and (2) conditioned on this selection, choosing a point $y\in \subdomain$ uniformly at random.

Our lightweight generator is based on a \textit{pruned} hierarchical decomposition.
The quality of the generator depends on the granularity of the subsets in the partition.
That is, allowing more subsets of smaller area leads to a sampling distribution closer to the empirical distribution of the input.
Thus, due to the memory cost of storing a more fine-grained partition, we observe a trade-off between utility and space.
To balance this trade-off, we aim to construct a hierarchical decomposition that provides finer granularity for ``hot'' parts of the sample space, where hot indicates a concentration of points.
The high-level strategy is to branch the decomposition at hot nodes in the hierarchy. 
In order to bound the memory allocation, we introduce a \textit{pruning} parameter $k$, which denotes the number of branches at each level of the hierarchy. 
Increasing $k$ allows for more branches and, thus, finer granularity at the cost of more memory.

In addition, as the space occupied by the generator is sublinear in the size of the database, we cannot rely on exact frequency counts to compute cardinalities for every \textit{possible} node in the decomposition.    
Therefore, we employ private sketches at deeper, and more populated, levels in the hierarchy to support approximate cardinality counting.
For example, at level $l$, a single private sketch can be used to count the number of points in each subdomain $\Omega_{\theta}$, for $\theta \in \{0,1\}^l$.
Once the private sketches process the data, they are then used to inform and grow the decomposition.
That is, nodes in the hierarchy are considered hot if their noisy approximate counts are large.
As the sketches are private, which means they are indistinguishable on neighboring inputs, the resulting decomposition is also private by the principle of post processing. 

To describe this process in more detail, we break it down into three components: initialization; parsing the data; and growing the partition. 

\begin{algorithm}[t]
    \SetAlgoLined
    \DontPrintSemicolon
    \SetKwProg{myproc}{define}{}{}
    \SetKwInput{Parameter}{Parameters}
    \SetKwInput{ds}{Data Structures}
    \KwIn{Database $\mathcal{X}$ of items from $\Omega$, pruning parameter $k$ and privacy parameters $\{\sigma_l\}$. }
    \myproc{$\construct(\mathcal{X}, (k, \lp, L), (w, j), \{\mathcal{D}_l\})$}
    {
        \tcp{\textsf{Initialize Data Structures}}
        Initialize complete binary partition tree $\tree$ with depth $\lp$ \label{line:init_one} \;
        \For{$l \in \{0,\ldots, \lp\}$}
        {
            \For{$v \in \{v \in \tree \mid {\level}(v) = l \}$}
            { 
                $ g \gets $ random value drawn from $\mathcal{D}_l$ \;
                $\nodecount \gets  g$ \label{line:exact_noise}\;
            }
        }
        \For{$l \in \{\lp+1,\ldots, L\}$}
        {
        $\sketch_l \gets$ initialize private sketch with dimension $(w,j)$ and noise $\mathcal{D}_l$\label{line:init_two}\;
        }
        \tcp{\textsf{Parse Dataset}}
        \For{$x \in \mathcal{X}$ } 
        {
            \For{$l \in \{0,\ldots, L\}$}
            {
                $\theta \gets $ the unique $\theta^{\prime} \in \{0,1\}^l$ such that $x \in \Omega_{\theta^{\prime}}$\;
                \If{$l\leq \lp$}
                {
                    Increment the counter in $\node \in \tree$\;
                }
                \Else 
                {
                    $\sketch_l.\update(\theta, 1)$ \;
                }  
            }
        }
        $\treehp \gets \grow(\tree, \{\sketch_l\}, k)$ \label{line:grow_part} \tcp*{Algorithm~\ref{alg:growpartition}}
        \KwRet $\treehp$  \;
    }
\caption{1-pass $\privhp$ algorithm.}
\label{alg:1passphd}
\end{algorithm}

\subsection{Initialization}

The boundaries of the subdomains $\subdomain$ can chosen arbitrarily.
However, these boundaries must be fixed a priori.
The pseudocode for initializing the component data structures is available in Algorithm~\ref{alg:1passphd} (Lines~\ref{line:init_one}-\ref{line:init_two}).
The decomposition of the domain $\Omega$ is encoded in a binary tree $\tree$, where each node in the tree $\node\in \tree$ represents the subset $\Omega_{\theta}$.
Let $\level(\node)$ represent the level to which $\node$ belongs.

The memory-utility trade-off for $\privhp$ is parameterized by $\lp$, the level at which pruning begins and $k$, the number of branches at each level $l>\lp$.
The initial decomposition contains all subsets $\Omega_{\theta}$, for $\theta \in \{0,1\}^{\leq \lp}$.
Therefore, the algorithm begins by initializing  $\tree$ as a \textit{complete} binary tree of depth $\lp$ (Line~\ref{line:init_one}). 

For a decomposition of depth $L$, the sampling distribution of the generator is based on the cardinalities of each subset $\Omega_{\theta}$ for $\theta \in \{0,1\}^{\leq L}$.
We store noisy \textit{exact} counts for subdomains $\subdomain$, where $\theta \in \{0,1\}^{\leq \lp}$, included in $\tree$ and noisy \textit{approximate} counts for subdomains $\subdomain$, where $\theta \in \{0,1\}^{\leq L} \setminus \{0,1\}^{\leq \lp}$.
The exact counters are stored in their corresponding nodes in $\tree$.
To ensure privacy, each counter at level $l\leq \lp$ is initialized with some random noise from a distribution $\mathcal{D}_l$ provided as input (Line~\ref{line:exact_noise}).
For each level $l> \lp$, the approximate counts for subsets $\Omega_{\theta}$, with $\theta \in \{0,1\}^l$, are stored in a private sketch ($\texttt{sketch}_l$) of dimension $w\times j$  and initialized with random noise from the  distribution $\mathcal{D}_l$ (Line~\ref{line:init_two}). 
These sketches will be used to grow the decomposition (Line~\ref{line:grow_part}) according to hot subsets of the sample space once the stream has been processed.

\subsection{Parsing the Data}

After initialization, we read the database in a single pass, one item at a time, while updating the internal data structures $\tree$ and $\{\sketch_l\}_{l>\lp}$.
For each update $x \in \mathcal{X}$, the procedure iterates through the levels in the hierarchy.
At each level $l$, the unique subdomain $\theta \in \{0,1\}^l$, such that $x\in \Omega_{\theta}$, is identified.
If $l\leq \lp$, then the counter in node $\node\in \tree$ is updated.
Otherwise, $\sketch_l$ is updated with the subset index $\theta$.
At the end of the stream, $\tree$ contains the noisy exact counts for subsets $\Omega_{\theta}$ with $\theta \in \{0,1\}^{\leq \lp}$ and the summaries $\{\sketch_l\}$ contain the noisy approximate counts for subsets at levels $l>\lp$.
At this point, these approximate counts are used to grow $\tree$ beyond level $\lp$.

\begin{algorithm}[t]
    \SetAlgoLined
    \DontPrintSemicolon
    \SetKwProg{myproc}{define}{}{}
    \SetKwInput{Parameter}{Parameters}
    \SetKwInput{ds}{Data Structures}
    \KwIn{Partition Tree $\tree$ and the collection of level-wise sketches $\{\sketch_l\}$ }
    \myproc{{$\grow(\tree, \{\sketch_l\}, k)$}}
    {
        Apply consistency to each non-leaf node $\node \in \tree$ in depth-first order using Algorithm~\ref{alg:consistency}\;
        $V \gets \{ \theta \mid \node \in \tree, \level(\node) = \lp\}$ \label{line:select_leaves}\;
        \For{$l \in \{ \lp+1, \ldots, L-1\}$}
        {
            \For{${\theta} \in V$}
            {
                \For{$\theta^* \in \{\theta 0, \theta 1\}$}
                {
                    $\hat{f}_{\theta^*} \gets \sketch_l.\query(\theta^*)$\;
                    add $v_{\theta^*}$ to $\tree$ with $v_{\theta^*}.\mathtt{count} = \hat{f}_{\theta^*}$\;
                }
                Apply consistency to $\node$\tcp*{Algorithm~\ref{alg:consistency}}
            }
            $V \gets $ the IDs of the Top-$k$ values in $\{\nodecount \mid v \in \tree, \level(v)=l+1\} $ \label{line:select_topk}\;
        }
        \KwRet $\tree$ \;
    }
\caption{Growing the $\privhp$ based on approximate counts at each level in the hierarchy.}
\label{alg:growpartition}
\end{algorithm}

\subsection{Growing the Partition}

Pseudocode for this step is available in Algorithm~\ref{alg:growpartition}, and an illustration of the process is provided in Figure~\ref{fig:grow_partition}.
Before growing the partition, a consistency step is preformed.
Consistency enforces two constraints.
First, it requires that the count of a parent node equals the sum of the counts of its child nodes.
Second, it requires that all counts are non-negative.
After processing the data, $\tree$ is not consistent due to the noise added for privacy.
The outcome of this consistency step is presented in Figure~\ref{fig:init_consistency}.
An equivalent consistency step is common in private histograms \cite{hay2009boosting}, where it is observed it can increase utility at the same privacy budget.

After the consistency step has been executed, the partition is expanded one level at a time.
The procedure begins by selecting the current leaf nodes $V$ of $\tree$ at level $\lp$ (Line~\ref{line:select_leaves}).
These are considered ``hot'' nodes.
Then, for each hot node $\node\in V$, it adds the two child nodes ($v_{\theta0}$ and $v_{\theta1}$) to $\tree$ as the decomposition of the node into two disjoint subsets (Figure~\ref{fig:gp_level3}).
In addition, the nodes $v_{\theta0}$ and $v_{\theta1}$ are initialized with the noisy frequency estimates retrieved from $\sketch_{\lp+1}$.
These estimates are then adjusted according to the consistency step (Figure~\ref{fig:consis_l3}).

After all the hot nodes have been expanded, the next iteration of hot nodes needs to be selected.
This is achieved by selecting the nodes with the Top-$k$ frequency estimates (Line~\ref{line:select_topk}).
With a new set of hot nodes, this process repeats itself and stops at depth $L-1$.
In summary, at each level in the iteration, the current hot nodes are branched into smaller subdomains at the next level in the hierarchy.
Then, the new subdomains with high frequency become hot at the subsequent iteration.

\begin{figure}[!ht]
    \centering
    \begin{subfigure}{0.31\textwidth}
        \includegraphics[width=\linewidth]{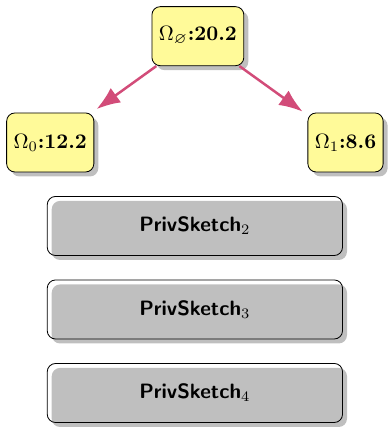}
        \caption{After processing the stream}
        \label{fig:input_growpartition}
    \end{subfigure}
    \hfill
    \begin{subfigure}{0.31\textwidth}
        \includegraphics[width=\linewidth]{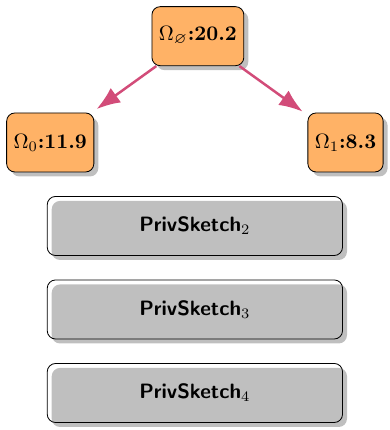}
        \caption{After consistency is applied to the tree}
        \label{fig:init_consistency}
    \end{subfigure}
    \hfill
    \begin{subfigure}{0.31\textwidth}
        \includegraphics[width=\linewidth]{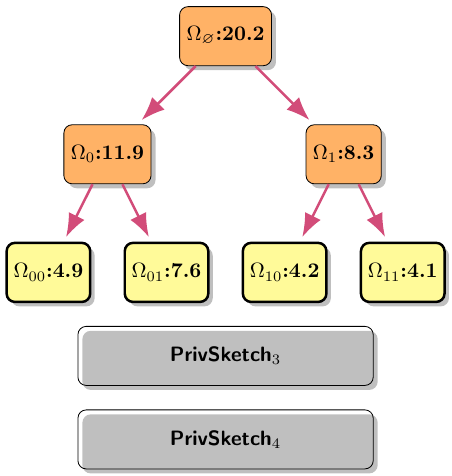}
        \caption{Adding nodes from $\sketch_2$}
        \label{fig:gp_level3}
    \end{subfigure}

    \vspace{10pt} 
    \begin{subfigure}{0.31\textwidth}
        \includegraphics[width=\linewidth]{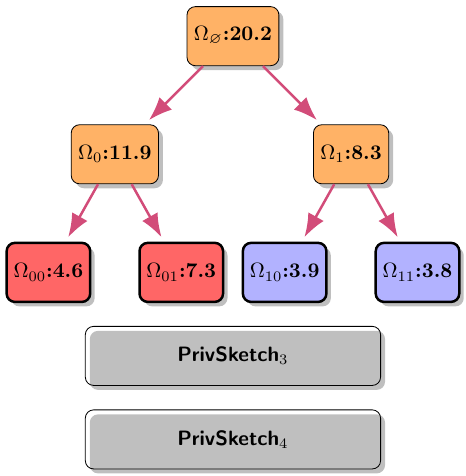}
        \caption{After consistency is applied to level 2}
        \label{fig:consis_l3}
    \end{subfigure}
    \hfill
    \begin{subfigure}{0.31\textwidth}
        \includegraphics[width=\linewidth]{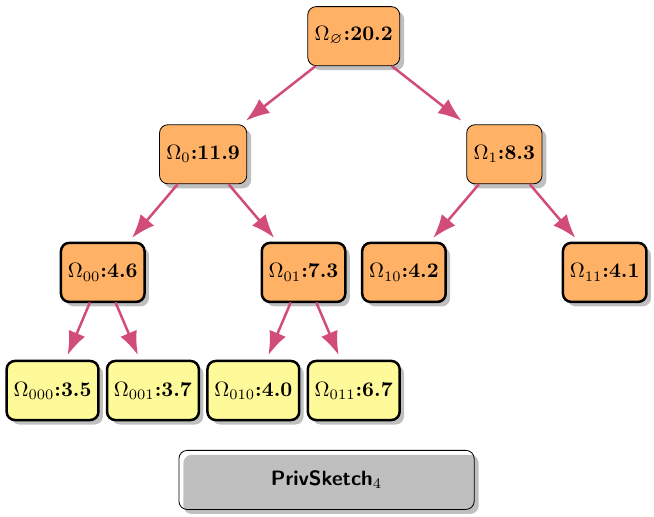}
        \caption{Pruning based on top-$2$ selection}
        \label{fig:pruning}
    \end{subfigure}
    \hfill
    \begin{subfigure}{0.31\textwidth}
        \includegraphics[width=\linewidth]{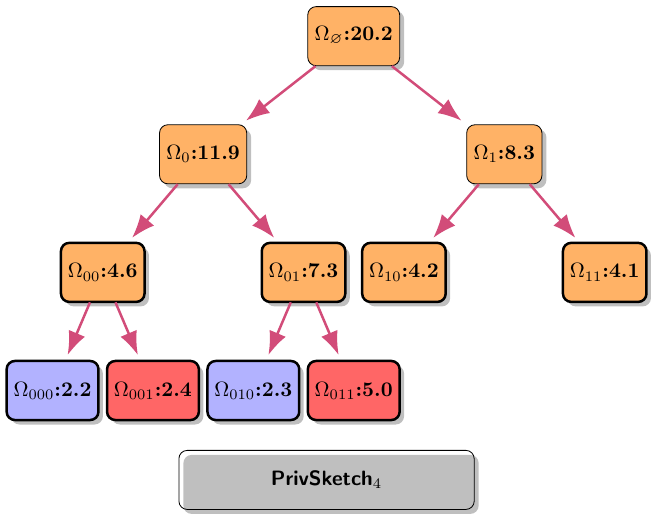}
        \caption{After consistency is applied to level 3}
        \label{fig:output_growpartiton}
    \end{subfigure}

    \caption{Illustration of Algorithm~\ref{alg:growpartition} with $k=2, \lp = 1$ and $L=4$.
    Figure~\ref{fig:input_growpartition} represents its input.}
    \label{fig:grow_partition}
\end{figure}

\subsection{Consistency}

Consistency ensures that (1) all counts are non-negative and (2) that the counts of two subregions add to the count of their parent region. 
The consistency step in Algorithm~\ref{alg:growpartition} is general and, following He \textit{et al.}~\cite{he2023algorithmically}, adheres to the following rule.
In the case of a deficit, when the
sum of the two subregional counts is smaller than the count of the parent region, both subregional counts should be increased. 
Conversely, in the case of a surplus, both subregional counts should be decreased. 
Apart from this requirement, we are free to distribute the deficit or surplus between the subregional counts.

Our main utility bound (Theorem~\ref{thm:utilityHP}) is based on a {concrete} instance of consistency.
The method we adopt is presented in Algorithm~\ref{alg:consistency}.
The main idea is to \textit{evenly} redistribute the error generated from sibling subregions. 
To formalize this idea, let $\node$ denote the parent node.
First, we calculate the difference between the subregional counts and their parent region (Line~\ref{line:sum}): $\Lambda = v_{\theta 0}.\counter + v_{\theta 1}.\counter - \node.\counter$.
This difference is then evenly redistributed across the subregions (Line~\ref{line:consistency}):
\begin{align}
    v_{\theta 0}.\counter \gets v_{\theta 0}.\counter - \Lambda/2 \quad \text{and} \quad v_{\theta1}.\counter \gets v_{\theta1}.\counter - \Lambda/2.
    \label{eqn:redistribution}
\end{align}
We also add two correction steps for whenever this approach might violate consistency.
The first correction makes sure that both subregional counts are non-negative prior to applying consistency (Line~\ref{line:correction1}).
The second involves applying a different redistribution method in the event that~\eqref{eqn:redistribution} violates consistency (Line~\ref{line:correction21}).
In this instance, the count of the violating node is set to $0$ and its sibling node inherits the full count from its parent.
Both correction steps \textit{reduce} the amount of error in the subregion counts.

\begin{algorithm}[t]
    \SetAlgoLined
    \DontPrintSemicolon
    \SetKwProg{myproc}{define}{}{}
    \SetKwInput{Parameter}{Parameters}
    \SetKwInput{ds}{Data Structures}
    \KwIn{Node $\node\in \tree$}
    \myproc{${\enforce}(\node)$}
    {
        \For{$\theta^* \in \{\theta 0, \theta 1\}$}
        {
            \If{$v_{\theta^*}.\counter <0$ \label{line:correction1}} 
            {
                \tcp{Error Correction Type 1}
                $v_{\theta^*}.\counter \gets 0$\;
            }
        }
        $\Lambda \gets v_{\theta 0}.\counter + v_{\theta 1}.\counter - \node.\counter$\;
        \label{line:sum}
        \If{$\min \{v_{\theta 0}.\counter - \Lambda/2, v_{\theta 1}.\counter - \Lambda/2\} < 0$ \label{line:correction21}}
        {
            \tcp{Error Correction Type 2}
            $(\theta_{\mathtt{min}}, \theta_{\mathtt{max}}) \gets$ order $(\theta_0, \theta_1)$ by their counters\;
            $v_{\theta_{\mathtt{min}}} \gets 0$\;
            $v_{\theta_{\mathtt{max}}}.\counter \gets \node.\counter$\;
        }
        \Else
        {
            \For{$\theta^* \in \{\theta 0, \theta 1\}$}
            {
            $v_{\theta^*}.\counter \gets v_{\theta^*}.\counter - \Lambda/2  $\;
            \label{line:consistency}
            }
        }
        \KwRet \;
    }
\caption{Enforcing consistency between nodes in $\tree$.}
\label{alg:consistency}
\end{algorithm}

\section{Private Synthetic Data}
\label{sec:private_synthetic}
An item can be sampled from the decomposition tree $\tree$ by selecting a number $u$ uniformly in the range $[0, v_{\varnothing}.\mathtt{count}]$, where $v_{\varnothing}$ is the root node of $\tree$.
Then, a root-to-leaf traversal of the tree is performed.
At each node $\node$ on the path, we retrieve the count from the left child $c\gets v_{\theta 0}.\mathtt{count}$.
We branch left if $c \geq u$; otherwise, we branch right.
When branching right, $u$ is updated with $u\gets u -\node.\counter$.
The final leaf node represents a subset of the sample space and we can return any item uniformly at random from this subset.

Note that this sampling algorithm can take any binary decomposition of $\Omega$ as input.
This makes any tree $\tree$ synonymous with a sampling distribution.
Therefore, throughout the rest of the paper we often refer to $\tree$ as a probability distribution.
For the remainder of this section, we establish privacy and provide bounds on the utility of the generator $\treehp$ output by Algorithm~\ref{alg:1passphd}, where utility is measured in the expected 1-Wasserstein metric.

\subsection{Privacy}

Algorithm~\ref{alg:growpartition} is completely deterministic.
Therefore, if the inputs to Algorithm~\ref{alg:growpartition} are differentially private, then the resulting partition is differentially private by the post-processing property (Lemma~\ref{lem:post_processing}).
The random perturbations introduced at initialization depend on the collection of noise distributions $\{\mathcal{D}_l\}$.
They should provide sufficient noise such that the output distributions of $\tree$ and $\{\sketch_l\}$ on neighboring datasets are indistinguishable.
The are many choices for $\{\mathcal{D}_l\}$ that impact both privacy and utility.
Here is one example.
\begin{restatable}{theorem}{privacythm}
    If the noise distributions have the following form:
    \begin{align}
        \mathcal{D}_l &= 
            \begin{cases}
                    {\laplacedist}(\sigma_l^{-1})  &  \text{for } l\leq \lp \\
                    {\laplacedist}^{w\times j}(j\sigma_l^{-1}) & \text{Otherwise }
            \end{cases}\;
    \label{eqn:laplace_parameters}
    \end{align}
    Then, the decomposition $\treehp$ output by Algorithm~\ref{alg:1passphd} is $\varepsilon$-differentially private for $\sum_{l=0}^{L}\sigma_l = \varepsilon$. 
    \label{thm:dp_result}
\end{restatable}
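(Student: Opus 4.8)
The plan is to decompose the privacy analysis level-by-level and invoke basic composition (Lemma~\ref{lem:composition}) across the $r+1$ levels, with post-processing (Lemma~\ref{lem:post_processing}) handling the deterministic growth/consistency step at the end. First I would fix a pair of neighboring streams $\mathcal{X} \sim \mathcal{X}^{\prime}$ differing in a single element, and observe that this single element contributes to exactly one subdomain at each level $l \in \{0, \ldots, r\}$; that is, the vector of subdomain counts at level $l$ has $L_1$-sensitivity exactly $1$. For levels $l \leq L_k$, the relevant released object is the vector of exact counters, each perturbed independently by $\laplacedist(\sigma_l^{-1})$; since only one counter changes and it changes by $1$, the Laplace mechanism (Lemma~\ref{lem:laplace_mechanism}) gives that the level-$l$ counters are $\sigma_l$-differentially private. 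For levels $l > L_k$, the released object is $\sketch_l = C_l(\mathcal{X}) + v$ with $v \sim \laplacedist^{w\times j}(j\sigma_l^{-1})$; here I would reuse the linearity-and-sensitivity argument from Section~\ref{sec:sketching} (namely $\triangle_1(\sketch_l) = j$ because a single element lands in one bucket per row across $j$ rows), so the Laplace mechanism again yields $\sigma_l$-differential privacy for $\sketch_l$.

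Next I would assemble these pieces. The collection of all perturbed data structures after the parsing pass --- the $L_k+1$ noisy exact-count layers together with the $r - L_k$ private sketches --- is the joint output of $r+1$ independent mechanisms, the $l$-th being $\sigma_l$-differentially private. By basic composition (Lemma~\ref{lem:composition}), their joint release is $\left(\sum_{l=0}^{r}\sigma_l\right)$-differentially private, which equals $\varepsilon$ by hypothesis. Finally, Algorithm~\ref{alg:growpartition} --- including the Top-$k$ selection, the addition of child nodes populated from sketch queries, and every application of $\enforce$ (Algorithm~\ref{alg:consistency}) --- takes only these private data structures as input and uses no fresh access to $\mathcal{X}$, so it is a (deterministic) post-processing map; by Lemma~\ref{lem:post_processing}, $\treehp$ inherits $\varepsilon$-differential privacy.

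The main obstacle is making the independence claim airtight: I need to argue that the noise draws $\{g \sim \mathcal{D}_l\}$ for the exact counters and the noise vectors for the sketches are all mutually independent and, crucially, independent of the data, so that the parsing pass genuinely produces the \emph{output} of $r+1$ separate Laplace mechanisms rather than something with correlated randomness. This is immediate from the pseudocode (each $\mathcal{D}_l$ draw in Lines~\ref{line:exact_noise} and~\ref{line:init_two} is made at initialization, before any data is seen, and the updates are deterministic given the noise), but it should be stated explicitly. A secondary subtlety worth a sentence is that within a single level $l \leq L_k$ the counters across the $2^l$ nodes are themselves the output of a single Laplace mechanism on the count-vector (sensitivity $1$ in $L_1$), not $2^l$ separate mechanisms --- otherwise the budget accounting would be off by a factor of the number of nodes; the same care applies to treating each $\sketch_l$ as one mechanism with sensitivity $j$ rather than $j$ mechanisms of sensitivity $1$. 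With those points addressed, the composition-then-post-processing argument closes the proof.
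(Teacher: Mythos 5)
Your proof is correct and follows essentially the same route as the paper's: decompose the parsing pass into per-level Laplace mechanisms (each level of the tree with $L_1$-sensitivity $1$, each sketch with sensitivity $j$), apply basic composition to obtain a total privacy loss of $\sum_{l=0}^{r}\sigma_l = \varepsilon$, and close with post-processing for the deterministic growth/consistency phase. Your explicit remarks on the mutual independence of the noise draws and on treating each level (respectively, each sketch) as a \emph{single} mechanism rather than $2^l$ (respectively, $j$) separate ones are welcome clarifications that the paper leaves implicit.
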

\begin{proof}
    On neighboring datasets $X= X^{\prime} \cup \{x\}$, we are required to minimize the effect of the additional element $x$ on the output distribution of the process.
    During data processing, the sensitive element $x$ impacts both the initial\footnote{By initial partition tree, we refer to the tree prior to the growing phase that occurs after data processing.} partition tree $\tree$ and the sketches $\{\sketch_l\}$.
    We consider both cases separately.
    With $\tree$ we store (noisy) exact counts.
    The sensitive element $x$ traverses a single root to leaf path, updating each node on the path, when it is processed.
    The counts are incremented by 1 and the path has length $\lp$.
    Therefore, the sensitivity of the initial partition tree is $\lp$.
    With noise $\laplacedist(\sigma_l^{-1})$ applied to each count on the path, the initial partition tree is $\sum_{l=0}^{\lp}\sigma_l$-differentially private by Lemmas~\ref{lem:laplace_mechanism} \& ~\ref{lem:composition}.
    As previously noted, a sketch has sensitivity $j$. 
    Therefore, $\laplacedist^{w\times j}(j\sigma_l^{-1})$ noise provides $\sigma_l$-differential privacy for the sketch at level $l$.
    Basic composition (Lemma~\ref{lem:composition}) and the observation that there are $(L-\lp)$ sketches completes the proof.
\end{proof}

\subsection{Utility and Performance}

We begin by introducing some notation.
Let $\gamma_l = \max_{\theta \in \{0,1\}^l} \diam(\subdomain)$ and $\Gamma_l = \sum_{\theta \in \{0,1\}^l}\diam(\subdomain).$
Let $C_l= \langle |\Omega_0|, \ldots, |\Omega_{2^l}| \rangle$ denote the vector of subdomain cardinalities at level $l$. 
To help capture the effect of pruning, we use the vector $\tail_k^l$ to denote $C_l$ with the top-$k$ cardinalities set to 0. 
For skewed inputs $||\tail_k^l||_1$ is small and can even be $0$ for sparse inputs.

Algorithm~\ref{alg:1passphd} is general and doesn't prescribe the type of private sketch or the noise distributions of the perturbations. 
For our concrete results, we use a private Count-min Sketch as the sketching primitive and follow the noise distributions of Lemma~\ref{thm:dp_result}.
\begin{theorem}
    On input $\mathcal{X}$, with sketch dimensions $(w=2k, j)$ and partition dimensions of pruning level $\lp$, hierarchy depth $L$ and pruning parameter $k$,
    for $\varepsilon = \sum_{l=0}^L \sigma_l$,
    Algorithm~\ref{alg:1passphd} produces a partition $\treehp$ that is $\varepsilon$-differentially private and has the following distance from the empirical distribution $\mu_{\mathcal{X}}$ in the expected 1-Wasserstein metric:
    \begin{align}
        \mathbb{E}[W_1(\mu_{\mathcal{X}},\treehp)] 
        &= 
        \noise + \approximation 
        \label{eqn:utility_bound}
    \end{align}
    where
    \begin{align*}
        \noise = \mathcal{O}\left(\frac{1}{n}\left(\sum_{l=0}^{\lp} \frac{\Gamma_{l-1}}{\sigma_l} + \sum_{l=\lp+1}^{L} \frac{kj\gamma_{l-1}}{\sigma_l}\right)\right), \quad
        \approximation = \mathcal{O}\left(\left(\frac{||\tail_k^L||_1}{n} + 2^{-j}\right)  \sum_{l=\lp+1}^{L} \gamma_{l-1}\right)
    \end{align*}
    \label{thm:utilityHP}
\end{theorem}
\noindent
The proof of this result is the content of Section~\ref{sec:proof}.
The components allow us to make sense of the bound.
The $\noise$ term represents the distance incurred, between $\mu_{\mathcal{X}}$ and $\treehp$, due to noise added for privacy.
This noise affects both the counts and the pruning procedure.
The $\approximation$ term represents the reduction in utility due to approximation. 
The $||\tail_k^L||_1$ term is dependent on the underlying distribution of the input $\mathcal{X}$.

The privacy and accuracy guarantees of Theorems~\ref{thm:dp_result} and~\ref{thm:utilityHP} hold for any choice of $\{\sigma_l\}$.
By optimizing the $\{\sigma_l\}$, we can achieve the best utility for a given level of privacy $\varepsilon = \sum_{l=0}^{L} \sigma_l$.
\begin{restatable}{lemma}{thmoptacc}
    With the optimal choice of privacy parameters, on input $\mathcal{X}$ and partition dimensions of $(k, \lp, L)$, the loss in utility due to noise perturbations ($\noise$ in \eqref{eqn:utility_bound}) is:
    \begin{align*}
        \noise = \mathcal{O}\left(\frac{1}{\varepsilon n} \left( \sum_{l=0}^{\lp}  \sqrt{\Gamma_{l-1}} + \sum_{l=\lp+1}^L \sqrt{j k\gamma_{l-1}} \right)^2 \right)
    \end{align*}
    \label{lem:opt_acc}
\end{restatable}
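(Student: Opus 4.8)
The plan is to treat the $\noise$ expression from Theorem~\ref{thm:utilityHP} as a function of the privacy parameters $\{\sigma_l\}_{l=0}^{r}$ subject to the single linear constraint $\sum_{l=0}^{r}\sigma_l = \varepsilon$, and minimize it. Write $\noise = \tfrac{1}{n}\sum_{l=0}^{r} \tfrac{a_l}{\sigma_l}$ where $a_l = \Gamma_{l-1}$ for $l\leq L_k$ and $a_l = k\gamma_{l-1}$ for $l>L_k$. Each term $a_l/\sigma_l$ is convex in $\sigma_l$ on $(0,\infty)$, so the objective is convex and the constrained minimum is found by Lagrange multipliers (or equivalently the Cauchy--Schwarz / power-mean inequality).

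First I would set up the Lagrangian $\mathcal{L} = \sum_l a_l/\sigma_l + \lambda(\sum_l \sigma_l - \varepsilon)$ and solve $\partial\mathcal{L}/\partial\sigma_l = -a_l/\sigma_l^2 + \lambda = 0$, giving $\sigma_l \propto \sqrt{a_l}$, i.e.
\begin{align*}
    \sigma_l = \varepsilon\cdot\frac{\sqrt{a_l}}{\sum_{l'=0}^{r}\sqrt{a_{l'}}}.
\end{align*}
This is the expression referenced as Equation~\eqref{eqn:opt_sigma} in the paper. Substituting back, each term becomes $a_l/\sigma_l = \sqrt{a_l}\cdot\bigl(\sum_{l'}\sqrt{a_{l'}}\bigr)/\varepsilon$, and summing over $l$ yields $\noise = \tfrac{1}{\varepsilon n}\bigl(\sum_{l=0}^{r}\sqrt{a_l}\bigr)^2$. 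Unpacking $a_l$ gives exactly
\begin{align*}
    \noise = \mathcal{O}\!\left(\frac{1}{\varepsilon n}\left(\sum_{l=0}^{L_k}\sqrt{\Gamma_{l-1}} + \sum_{l=L_k+1}^{r}\sqrt{k\gamma_{l-1}}\right)^{2}\right),
\end{align*}
as claimed. Alternatively, I could skip Lagrange multipliers and apply Cauchy--Schwarz directly: $\bigl(\sum_l \sqrt{a_l}\bigr)^2 = \bigl(\sum_l \sqrt{a_l/\sigma_l}\cdot\sqrt{\sigma_l}\bigr)^2 \leq \bigl(\sum_l a_l/\sigma_l\bigr)\bigl(\sum_l \sigma_l\bigr) = \noise\cdot n\cdot\varepsilon$, which gives the lower bound on $\noise$ matching the stated value, with equality exactly when $\sigma_l\propto\sqrt{a_l}$.

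The argument is essentially routine once the objective is identified as a sum of $a_l/\sigma_l$ terms; there is no genuine obstacle, only bookkeeping. The one point requiring a little care is the constraint bookkeeping: the $\noise$ term in Theorem~\ref{thm:utilityHP} involves only $\sigma_0,\dots,\sigma_r$ and the privacy constraint from Theorem~\ref{thm:dp_result} is precisely $\sum_{l=0}^{r}\sigma_l=\varepsilon$, so the optimization is over exactly the right variables with no slack — I would state this explicitly to justify that the $\{\sigma_l\}$ minimizing $\noise$ are simultaneously feasible for the privacy guarantee. A secondary minor point is that the constants hidden in the $\mathcal{O}(\cdot)$ of Theorem~\ref{thm:utilityHP} are uniform in $l$, so pulling them outside the sum before optimizing is legitimate.
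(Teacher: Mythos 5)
Your proposal is correct and follows essentially the same route as the paper: both set up the same constrained minimization of $\sum_l a_l/\sigma_l$ over $\sum_l \sigma_l = \varepsilon$, solve it by Lagrange multipliers to get $\sigma_l \propto \sqrt{a_l}$ (matching Equation~\eqref{eqn:opt_sigma}), and substitute back to obtain the squared sum; the paper even cites the same Lagrange-multiplier technique from He \textit{et al.} The only additions you make --- the convexity remark justifying that the stationary point is a minimum, and the Cauchy--Schwarz shortcut --- are sound and arguably cleaner than the paper's substitution (which contains minor typographical slips in indices), but they do not constitute a different method.
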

\noindent  
Due to its similarity to Theorem 11 in He \textit{et al} \cite{he2023algorithmically}, the proof is relegated to Appendix~\ref{sec:app_opt_sigma}.
Theorem~\ref{thm:utilityHP} applies for any input domain $\Omega$.
To make the result more tangible and to demonstrate its applicability, following prior work \cite{boedihardjo2024private,he2023algorithmically}, we apply it (in conjunction with Lemma~\ref{lem:opt_acc}) to the hypercube $\Omega=[0,1]^d$.
This leads to the following result.

\begin{restatable}{corollary}{corollaryhypercube}
    When $\Omega = [0,1]^d$ equipped with the $l^{\infty}$ metric, for pruning parameter $k$, $\privhp$ can process a stream $\mathcal{X}$ of size $n$ in $M=\mathcal{O}(k\log^2(n))$ memory and $\mathcal{O}(\log (\varepsilon n))$ update time.
    $\privhp$ can subsequently output a $\varepsilon$-differentially private synthetic data generator $\treehp$, in $\mathcal{O}(M \log n)$ time, such that
    \begin{align*}
        \mathbb{E}[W_1(\mu_{\mathcal{X}}, \treehp )] = \begin{cases}
             \mathcal{O}\left( \frac{\log^2(M)}{\varepsilon n} + \frac{||\tail_k^{\varepsilon n}||}{Mn} \right) & if d = 1  
            \\
            \mathcal{O}\left(\frac{M^{(1-\frac{1}{d})}}{\varepsilon n}+ \frac{||\tail_k^{\varepsilon n}||}{M^{1/d}n}\right) & if d\geq 2 
    \end{cases},
    \end{align*}
    \label{cor:hypercube}
\end{restatable}
\noindent
The proof is the content of Section~\ref{sec:proof_hypercube}. 
For comparison, the state-of-the-art in the static setting, $\pmm$ \cite{he2023algorithmically}, achieves a utility bound of $\mathcal{O}(\log^2(\varepsilon n)/ (\varepsilon n))$, for $d=1$, with a memory allocation of $\mathcal{O}(\varepsilon n)$ (See Table~\ref{tab:performance_results}). 
Thus, through the hierarchy pruning parameter $k$, $\privhp$ provides a smooth interpolation from the optimal static case to a memory bounded environment.
The same observation is true for $d=2$. 
Further, for sparse or highly skewed inputs, where $||\tail_k^{\varepsilon n}||_1$ is small, pruning may even improve the utility bound, as fewer nodes in the hierarchy results in less noise being added.

\section{Measuring Utility}
\label{sec:m_utility}
Before proving Theorem~\ref{thm:utilityHP}, we need a method to quantify the distance between $\empiricalx$ and $\treehp$.  
In the empirical distribution, each point \( x \in \mathcal{X} \) carries a unit of probability mass. 
When a point is abstracted into a set within a partition representing a generator, its probability mass is evenly distributed across the set. 
This reflects the process where, conditioned on a set being selected by the generator, a synthetic point is uniformly sampled from the set. 
The total distance this probability mass moves during abstraction is bounded by the diameter of the subdomain.  
Similarly, modifications to node counts in the decomposition tree result in shifts of probability mass within the generator.
Bounding the utility of the generator, therefore, involves constraining the distance these probability masses move as $\empiricalx$ transforms into $\treehp$.  

To formalize these bounds, we introduce new terminology that captures errors arising from both noise perturbations and frequency approximations. 
This terminology also enables a precise analysis of how the consistency step balances these errors across nodes. 
The consistency step adjusts a parent node’s count by redistributing it among its child nodes. 
Inaccuracies in this redistribution, which we refer to as a \textit{consistency error}, maintain a divergence between the empirical distribution and the synthetic data generator. 
The utility cost of a consistency error depends on its magnitude and the size of the affected subdomain. 
Consequently, bounding the utility loss from noise and approximation requires both measuring each consistency error and identifying its location.

\subsection{Quantifying a Consistency Error}
\label{app:quantify_miss}

A consistency error represents the transfer of probability mass from one subdomain to its sibling subdomain, altering the probability distribution encoded by the underlying decomposition.
To arrive at a formal expression of a consistency error, we begin by introducing some notation.
Let $c_{\theta} = |\subdomain|$ denote the exact count at node $v_{\theta}$, let $\lambda_{\theta}$ denote the noise added to $\node.\counter$ from privacy perturbations, and let $e_{\theta}$ denote the approximation error added to $v_{\theta}.\textsf{count}$ due to hashing collisions in the sketch.
Lastly, we define $\miss(\node)$ as the size of the consistency error incurred at node $\node$.

To help quantify $\miss(\node)$, we take an accounting approach, where noisy approximate counts are disaggregated into various components using the notation introduced above.
This allows us to identify which part of the adjusted consistent counts constitutes an error.
As we do not want to double count a consistency error, $\miss(\node)$ does not include consistency errors that occur at ancestor nodes and are, subsequently, inherited, due to previous consistency steps, in the count at $\node$. 
Therefore, the size of $\miss(\node)$ is solely influenced by the errors in its two child nodes.

\begin{examplebox}
This example evaluates $\miss(\node)$ for the subtree in Figure~\ref{fig:miss_example}.
The cardinalities of the subdomains are $c_{\theta} = 5, c_{\theta0}=3, c_{\theta 1}=2$.
From~\eqref{eqn:parent_after}, it follows that $\err_{\theta}=\node.\counter^{\mathtt{after}}-c_{\theta}=-0.4$.
The component errors in the child nodes are $\lambda_{\theta 0}=-0.5,e_{\theta 0}=1, \lambda_{\theta 1}=-0.3, e_{\theta 1}=2$.
Thus, the child counts prior to consistency are
\begin{align*}
    v_{\theta0}.\counter^{\mathtt{before}} &= c_{\theta0} + \lambda_{\theta 0} + e_{\theta0} = 3.5 
    &v_{\theta1}.\counter^{\mathtt{before}} = c_{\theta1} + \lambda_{\theta 1} + e_{\theta1} = 3.7.
\end{align*}
Using the formula in~\eqref{eqn:miss_definition}, the size of the $\miss$ at $\node$ is
\[
 \miss(v_{\theta}) = |(\lambda_{\theta0}- \lambda_{\theta1} + e_{\theta0}-e_{\theta1})/2| = 0.6
\]
This value can be expressed as a portion of the consistent counts in the child nodes.
\begin{align*}
    v_{\theta0}.\counter^{\mathtt{after}} &= v_{\theta0}.\counter^{\text{before}} - \Lambda/2 = c_{\theta 0} +\frac{\err_{\theta}}{2} - \miss(\node) = 2.2 \\
    v_{\theta1}.\counter^{\mathtt{after}} &= v_{\theta1}.\counter^{\text{before}} - \Lambda/2 = c_{\theta 1} +\frac{\err_{\theta}}{2} + \miss(\node) = 2.4
\end{align*}
Therefore, a $\miss$ can be understood as the count transferred from one subdomain to its sibling \textit{after} the subdomain cardinalities have been adjusted by the existing error in the parent node.
For example, the total error in $v_{\theta 0}.\counter$ after consistency is 0.8.
However, 0.2 of this error comes from consistency errors at ancestor subdomains.
Therefore $\miss(\node)$ captures the precise count transferred between $\Omega_{\theta 0}$ and $\Omega_{\theta 1}$ due to local errors.
\label{example:miss}
\end{examplebox}
\begin{figure}[htbp]
    \centering
    \begin{subfigure}{0.35\textwidth} 
        \centering
        \includegraphics[width=\textwidth]{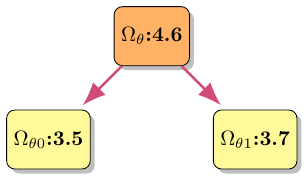} 
        \caption{Before consistency}
        \label{fig:sub1}
    \end{subfigure}
    \hfill 
    \begin{subfigure}{0.35\textwidth}
        \centering
        \includegraphics[width=\textwidth]{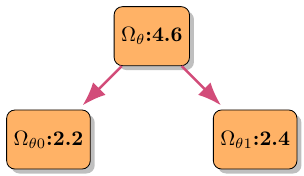} 
        \caption{After consistency}
        \label{fig:sub2}
    \end{subfigure}
    \caption{Subtree used for Example~\ref{example:miss}}
    \label{fig:miss_example}
\end{figure}
A consistency error is dependent on the method employed for consistency. 
The method we adopt (Algorithm~\ref{alg:consistency}) has the following form. 
\begin{align}
    \Lambda = v_{\theta 0}.\counter + v_{\theta 1}.\counter  
    &- 
    \node.\counter. \label{eqn:cons1} \\
    v_{\theta 0}.\counter \gets v_{\theta 0}.\counter - \Lambda/2 \quad 
    &\text{and} 
    \quad v_{\theta1}.\counter \gets v_{\theta1}.\counter - \Lambda/2.
    \label{eqn:cons2}
\end{align}

Note that Algorithm~\ref{alg:consistency} also contains two correction steps for when~\eqref{eqn:cons2} might violate consistency.
We will address these correction steps, and how they affect $\miss$, at a later point.

For clarity, let $\counter^{\text{before}}$ refer to a count before consistency is applied and $\counter^{\text{after}}$ refer to a count after it is made consistent.
When consistency is enforced at $v_{\theta}$, consistency has already been applied at the parent of $v_{\theta}$.
Therefore, the following equality already holds:
\begin{align}
    v_{\theta}.\counter^{\text{after}} = c_{\theta} + \err_{\theta},
    \label{eqn:parent_after}
\end{align}
where $\err_{\theta}$ accumulates consistency errors inherited at $v_{\theta}$ from all its ancestors.
Prior to consistency, the count in a child node $v_{\theta0}$ has the following form:
\begin{align*}
    v_{\theta0}.\counter^{\text{before}} 
    &=
    c_{\theta0} + \lambda_{\theta 0} + e_{\theta0}.
\end{align*}
Note that $e_{\theta0}=0$ if $\level(v_{\theta0})\leq \lp$, as no sketches are used.
To enforce consistency between child nodes, an adjustment variable (See~\eqref{eqn:cons1}) is calculated:
\begin{align*}
    \Lambda &= v_{\theta0}.\counter^{\text{before}} + v_{\theta1}.\counter^{\text{before}} - v_{\theta}.\counter^{\text{after}} \\ 
    &= c_{\theta0} + \lambda_{\theta 0} + e_{\theta0} +c_{\theta1} + \lambda_{\theta 1} + e_{\theta1} - c_{\theta} - \err_{\theta} \\
    &=  \lambda_{\theta 0} + e_{\theta0} + \lambda_{\theta 1} + e_{\theta1}  - \err_{\theta}
\end{align*}
Focusing on the left child $v_{\theta 0}$,
the size of $\miss$ can be inferred by calculating $v_{\theta 0}.\counter^{\text{after}}$ (See~\eqref{eqn:cons2}).
\begin{align}
    v_{\theta 0}.\counter^{\text{after}} &= v_{\theta 0}.\counter^{\text{before}} - \Lambda/2 \nonumber \\ 
    &=  c_{\theta0} + \lambda_{\theta 0} + e_{\theta0} - (\lambda_{\theta 0} + e_{\theta0} + \lambda_{\theta 1} + e_{\theta1}  - \err_{\theta})/2 \nonumber  \\
    &=c_{\theta0} + (\lambda_{\theta0}- \lambda_{\theta1} + e_{\theta0}-e_{\theta1})/2 +\err_{\theta}/2
    \label{eqn:misses_consistency}
\end{align}
As $\err_{\theta}$ refers to points already counted as errors (which we do not wish to count twice),
it follows that
\begin{align}
    \miss(v_{\theta}) &= |(\lambda_{\theta0}- \lambda_{\theta1} + e_{\theta0}-e_{\theta1})/2|.
    \label{eqn:miss_definition}
\end{align}
Therefore, a consistency error occurs when there is a difference in the errors between sibling nodes. 
This difference is evenly split between the subdomains.
For greater clarity, concrete example of a consistency error is provided in Example~\ref{example:miss}.
With~\eqref{eqn:miss_definition} in place, we can now pursue a bound for $\mathrm{E}[\miss(\node)]$.
\begin{restatable}{lemma}{lemmiss}
    With consistency applied according to Algorithm~\ref{alg:consistency}, on sketch parameters $2w$ and $j$ and the noise distribution from Equation~\ref{eqn:laplace_parameters}, then for all internal nodes $\node\in \treehp$:
    \begin{align*}
        \mathbb{E}[{\miss}(\node)] \leq 
        \begin{cases}
            2\sqrt{2}\sigma_{l+1}^{-1} & \level(\node ) < \lp \\
            2\sqrt{2}\sigma_{l+1}^{-1} \cdot j + \frac{||\textup{\textsf{tail}}_w^{l+1}||}{w}  + 2^{-j+1} n &   \textup{\text{Otherwise}}
        \end{cases}
    \end{align*}
    \label{lem:miss_error}
\end{restatable}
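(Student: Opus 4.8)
The plan is to take the expression for $\miss(\node)$ from Equation~\eqref{eqn:miss_definition}, namely $\miss(\node) = |(\lambda_{\theta0}- \lambda_{\theta1} + e_{\theta0}-e_{\theta1})/2|$, and bound its expectation by splitting into the noise contribution ($\lambda$ terms) and the approximation contribution ($e$ terms) via the triangle inequality: $\mathbb{E}[|\miss(\node)|] \le \tfrac12\mathbb{E}[|\lambda_{\theta0}-\lambda_{\theta1}|] + \tfrac12\mathbb{E}[|e_{\theta0}-e_{\theta1}|]$. I will handle the two regimes ($\level(\node)<L_k$ versus $\level(\node)\ge L_k$) in parallel, noting that in the first regime $e_{\theta0}=e_{\theta1}=0$ since no sketches are used, so only the noise term survives.

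First I would bound the noise term. For a node at level $l$, the children are at level $l+1$, and by Equation~\eqref{eqn:laplace_parameters} each child count is perturbed by noise that is $\laplacedist(\sigma_{l+1}^{-1})$-distributed (when $l+1 \le L_k$) or, when $l+1 > L_k$, the relevant per-cell sketch noise is $\laplacedist(j\sigma_{l+1}^{-1})$ on each of the $j$ rows. The difference $\lambda_{\theta0}-\lambda_{\theta1}$ of two independent centered Laplace variables of scale $b$ has $\mathbb{E}[|\lambda_{\theta0}-\lambda_{\theta1}|] \le \sqrt{\mathbb{E}[(\lambda_{\theta0}-\lambda_{\theta1})^2]} = \sqrt{2\operatorname{Var}(\laplacedist(b))} = \sqrt{2\cdot 2b^2} = 2b$, by Jensen's inequality. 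This gives $\tfrac12\mathbb{E}[|\lambda_{\theta0}-\lambda_{\theta1}|] \le b$. In the first regime $b=\sigma_{l+1}^{-1}$, and in the second the effective scale after taking the min/combination across $j$ rows needs care — I would argue the noise contribution to the count estimate is bounded in expectation by $\sqrt{2}\cdot j\sigma_{l+1}^{-1}$ per child (the factor $j$ coming from the sketch sensitivity scaling), so the difference contributes at most $2\sqrt{2}j\sigma_{l+1}^{-1}$ after the $\tfrac12$ factor; I need to reconcile this with the stated bound $2\sqrt2\sigma_{l+1}^{-1}j$, which suggests the intended argument keeps a crude constant and does not optimize. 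The precise constant-chasing (whether it is $b$, $\sqrt2 b$, or $2\sqrt2 b$) is a routine but fiddly point I would pin down against the Laplace variance $2b^2$ and the exact way noise enters the consistent count.

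Next I would bound the approximation term $\tfrac12\mathbb{E}[|e_{\theta0}-e_{\theta1}|]$, which only appears when $\level(\node)\ge L_k$. Here $e_{\theta0}$ and $e_{\theta1}$ are the collision errors in a Count-Min Sketch of width $w$ for the two sibling subdomains. Rather than bound each by Lemma~\ref{lem:cms_expected_error} (which would give $(n-c_{\theta0})/w + (n-c_{\theta1})/w$, far too large), the key observation is that $\theta0$ and $\theta1$ partition $\theta$, so the two collision errors share most of their contributing terms. Writing $e_{\theta i} = \sum_{\theta'\notin\{\theta0,\theta1\}} \mathbbm{1}[h(\theta i)=h(\theta')]c_{\theta'} + \mathbbm{1}[h(\theta0)=h(\theta1)]c_{\theta(1-i)}$, the common ``external'' collision mass largely cancels in the difference in expectation, and what remains is governed by the mutual collision term between the siblings, which is $c_{\theta(1-i)}$ with probability $1/w$. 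A careful expansion of $\mathbb{E}[|e_{\theta0}-e_{\theta1}|]$ along the lines of the proof of Lemma~\ref{lem:cms_expected_error} (linearity, $\Pr[h_i(\cdot)=h_i(\cdot)]\le 1/w$, universal hashing independence across the distinct external items) yields $\mathbb{E}[|e_{\theta0}-e_{\theta1}|] \le |c_{\theta0}-c_{\theta1}|/w + (\text{a symmetric external term that cancels})$, delivering the $|c_{\theta0}-c_{\theta1}|/w$ in the statement. I expect this cancellation of the external collision mass in the difference of siblings to be the main obstacle: one has to be precise that it is the \emph{difference} $e_{\theta0}-e_{\theta1}$ (not the sum) that kills the $n/w$-scale term, handling the sign of each collision indicator and using that the two siblings hash independently of the external items but the external items' hash values are shared.

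Finally I would combine: for $\level(\node)<L_k$, only the noise term contributes, giving $\mathbb{E}[|\miss(\node)|]\le 2\sqrt2\sigma_{l+1}^{-1}$; for $\level(\node)\ge L_k$, adding the noise term (now scaled by $j$ through the sketch sensitivity) and the approximation term gives $\mathbb{E}[|\miss(\node)|]\le 2\sqrt2\sigma_{l+1}^{-1}j + |c_{\theta0}-c_{\theta1}|/w$. I would also note that the two correction steps in Algorithm~\ref{alg:consistency} only \emph{reduce} the error (as stated in Section~\ref{sec:consistency}), so it suffices to bound the miss of the uncorrected redistribution~\eqref{eqn:cons2}, which is exactly what Equation~\eqref{eqn:miss_definition} captures; hence the bound holds unconditionally for all internal nodes.
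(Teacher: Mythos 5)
Your overall strategy — split $\miss(v_\theta)$ via the triangle inequality into a noise term and a sketch-error term, bound each separately, and argue that the two correction branches of Algorithm~\ref{alg:consistency} only shrink the miss so that formula~\eqref{eqn:miss_definition} already supplies an upper bound — follows the same route as the paper. The paper organizes this into three explicit cases (no correction, Correction~1, Correction~2) and for the latter two introduces rescaled error terms $\lambda'_{\theta i}$, $e'_{\theta i}$ of no larger magnitude; you compress this into one closing remark, which is acceptable as a sketch but would need the paper's rescaling argument in a full write-up.

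For the noise term you use Jensen with the variance of a Laplace difference to get $\tfrac12\mathbb{E}[|\lambda_{\theta0}-\lambda_{\theta1}|]\le b$, whereas the paper bounds it by $\mathbb{E}[\max\{|\lambda_{\theta0}|,|\lambda_{\theta1}|\}]$ and then estimates that maximum. Your constant is sharper, and you correctly note the paper does not optimize it; both routes reach the stated $2\sqrt{2}\sigma_{l+1}^{-1}$.

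The substantive problem is the sketch-error term $\tfrac12\mathbb{E}[|e_{\theta0}-e_{\theta1}|]$, which you yourself flag as the main obstacle. Your proposed decomposition of $e_{\theta i}$ into a mutual sibling-collision term plus an ``external'' collision mass, followed by the claim that the external mass cancels, does not close the gap: the external contribution has mean zero, so it cancels in the \emph{signed} expectation $\mathbb{E}[e_{\theta0}-e_{\theta1}]$, but the quantity you need is the \emph{absolute} expectation, and Jensen runs the wrong way ($\mathbb{E}|X|\ge|\mathbb{E}X|$). Indeed $\mathbb{E}[e_{\theta0}-e_{\theta1}]=(c_{\theta1}-c_{\theta0})/w$ makes $|c_{\theta0}-c_{\theta1}|/w$ a \emph{lower} bound on $\mathbb{E}[|e_{\theta0}-e_{\theta1}|]$; any matching upper bound must actually control the fluctuations of the external collision mass, which your cancellation heuristic ignores. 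For what it is worth, the paper is also terse here: it asserts the step ``follows from Lemma~\ref{lem:cms_expected_error},'' but that lemma bounds only $\mathbb{E}[e_\theta]$, not $\mathbb{E}[|e_{\theta0}-e_{\theta1}|]$, so this step is underjustified in both proofs; your proposal does not supply the missing argument and the cancellation idea, as stated, would not.
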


\section{Proof of Theorem~\ref{thm:utilityHP}}
\label{sec:proof}

We break down the proof into a series of steps that are \textit{equivalent} to Algorithm~\ref{alg:1passphd}.
$\treehp$ can be constructed from ${\mathcal{X}}$ using the following steps: 
\begin{enumerate}[label=\texttt{Step (\arabic*)}, leftmargin = 1.5cm]
    \item \label{step:two}
    Construct a partition tree $\mathcal{T}_{\mathcal{X}}$ that summarizes $\mathcal{X}$ using \textit{exact} counts and a \textit{complete} binary hierarchical partition of depth $L$ (Figure~\ref{fig:proof_step1}).
    Then conduct \textit{exact} pruning on $\mathcal{T}_{\mathcal{X}}$ by branching at the nodes with the exact top-$k$ counts at each level $l \geq \lp$ to produce $\treetop$.
    An example of $\treetop$ is provided in Figure~\ref{fig:proof_step2}. 
    \item \label{step:three}
    Constructing $\treeprune$ by adjusting $\treetop$ so that its structure matches $\treehp$.
    $\treeprune$ captures the effect of approximate pruning.
    Note that $\treeprune$ still has exact counts.
    An example of  $\treeprune$ is provided in Figure~\ref{fig:proof_step3}.
    \item \label{step:four}
    Add the privacy noise and approximation errors to the exact counts in $\treeprune$ and apply the consistency step to produce $\treehp$.
    An example of $\treehp$ is provided in Figure~\ref{fig:proof_step4}.
\end{enumerate}

\begin{figure}[]
    \centering
    \begin{subfigure}[t]{0.45\textwidth}
        \centering
        \includegraphics[width=\textwidth]{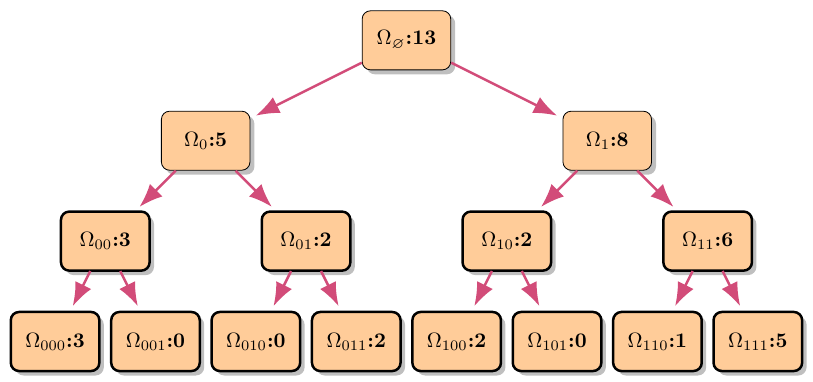}
        \caption{
        Example of $\treex$; a complete binary hierarchical decomposition of depth $L$.
        }
        \label{fig:proof_step1}
    \end{subfigure}
    \hfill
    \begin{subfigure}[t]{0.45\textwidth}
        \centering
        \includegraphics[width=\textwidth]{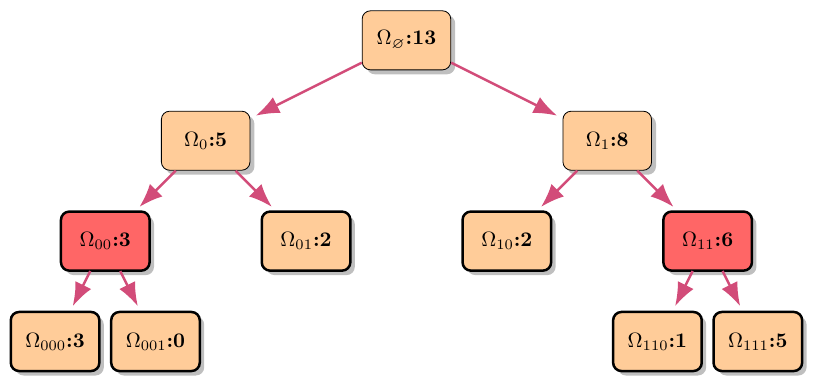}
        \caption{
        Example of $\treetop$.
        The distance between $\empiricalx$ and $\treetop$ is bound in Lemma~\ref{lem:step2}.
        }
        \label{fig:proof_step2}
    \end{subfigure}

    \vspace{0.8em} 

    \begin{subfigure}[t]{0.45\textwidth}
        \centering
        \includegraphics[width=\textwidth]{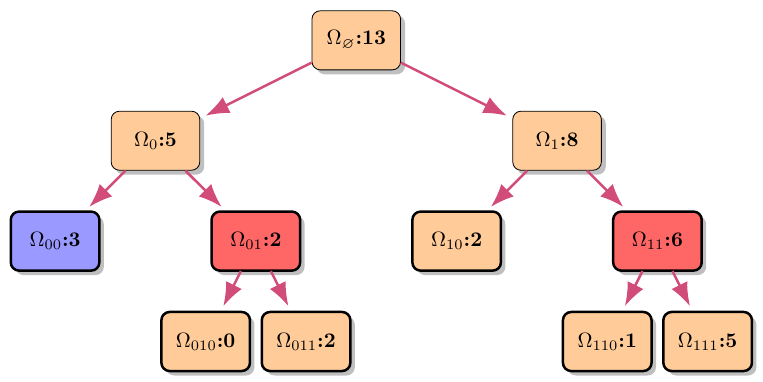}
        \caption{
        Example of $\treeprune$.
        The distance between $\treetop$ and $\treeprune$ is bound in Lemma~\ref{lem:step3}.
        }
        \label{fig:proof_step3}
    \end{subfigure}
    \hfill
    \begin{subfigure}[t]{0.45\textwidth}
        \centering
        \includegraphics[width=\textwidth]{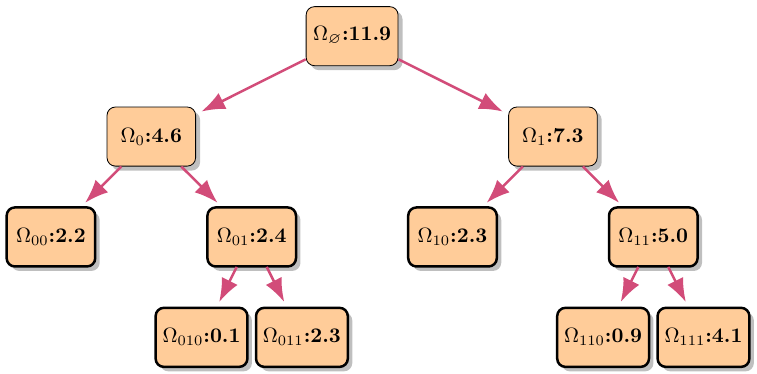}
        \caption{
        Example of $\treehp$.
        The distance between $\treehp$ and $\treeprune$ is bound in Lemma~\ref{lem:step4}.
        }
        \label{fig:proof_step4}
    \end{subfigure}

    \caption{Illustration of the proof pipeline for Theorem~\ref{thm:utilityHP}.
    $k = 2$, $\lp = 2$, $L=3$.
    }
    \label{fig:proof_pipeline}
\end{figure}

Figure~\ref{fig:proof_pipeline} illustrates the sequence of trees produced by this process.
Note that these steps are {equivalent} to Algorithm~\ref{alg:1passphd}.
However, they are not the same and are only introduced for analytic purposes.
By the triangle inequality, it follows that, to bound $\mathbb{E}[W_1(\empiricalx,\treehp)]$, it suffices to bound the distance between each pair of trees in the sequence.
We now bound the cost of each step separately, beginning with~\ref{step:two}.

At \ref{step:two}, exact pruning has the effect of merging sparse sibling leaf nodes into larger subdomains.
This reduces the granularity of the partition and, thus, its utility.
The exact pruning step allows us to quantify the impact of the underlying data distribution on the utility of the data sampler.
For example, distributions that are highly skewed will maintain a majority of their points in top-$k$ nodes. 
Therefore, pruning will have a smaller impact on the utility of the sampler under high skew. 
This loss in utility is bounded by the following result.
\begin{lemma}
    $\mathbb{E} W_1(\empiricalx, \treetop) \leq \frac{1}{n} ||\tail_k^L||_1 \cdot \sum_{l=\lp+1}^{L-1} \gamma_l$
    \label{lem:step2}
\end{lemma}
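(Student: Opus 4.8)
The plan is to bound the $1$-Wasserstein distance between $\treex$ (complete depth-$r$ tree with exact counts) and $\treetop$ (the result of keeping, at every level $l > L_k$, only the top-$k$ nodes by exact count) by tracking where probability mass must move when we pass from one sampler to the other. The key observation is that exact top-$k$ pruning at level $l$ is equivalent to the following operation on the tree: among the $\mathcal{C}_l$ candidate nodes produced by the surviving level-$(l-1)$ nodes, the bottom ones (ranks $k+1$ and below) are \emph{not} expanded, so their mass stays pooled in the parent subdomain at level $l-1$ rather than being refined into the two children at level $l$. Equivalently, the empirical mass sitting in those discarded candidate subdomains gets spread over a coarser region. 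Since each such point only has to move within a subdomain of diameter at most $\gamma_{l-1}$ (its parent's diameter) — or, accounting more carefully, within the level-$l$ region of diameter $\gamma_l$ — the total transport cost contributed at level $l$ is at most $\gamma_l$ times the number of points in the discarded candidates, which is exactly $\sumlk$ by the definition in Equation~\eqref{eqn:sumlk}, recalling the candidates are ranked in increasing order of cardinality so ranks $k+1,\dots,2k$ (or $k,\dots,2k-1$ in the indexing used) are the bottom-$k$ out of at most $2k$ candidates.

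Concretely, I would proceed as follows. First, set up a coupling/telescoping argument: write the passage from $\treex$ to $\treetop$ as a composition of per-level pruning operations, one for each level $l$ from $L_k+1$ up to $r-1$, and use the triangle inequality for $W_1$ to reduce to bounding the cost of each single-level pruning step. Second, for a fixed level $l$, identify the nodes that are removed: these are the candidate nodes $\Omega_{\theta_l^{(i)}}$ with rank $i$ below the top-$k$ threshold, i.e. the bottom-$k$ of the at-most-$2k$ candidates, whose total exact count is $\sumreg_l^{k,2k-1} = \sumlk$. Third, exhibit an explicit transport plan: for each point counted in a discarded candidate subdomain, in $\treetop$ its mass is distributed uniformly over the (un-refined) parent region at level $l-1$ instead of the child region at level $l$; the distance any unit of mass moves is bounded by the diameter of the level-$l$ region it started in, which is at most $\gamma_l$. (Mass in the top-$k$ surviving nodes does not move at this step.) Fourth, sum: the cost of the level-$l$ step is at most $\gamma_l \sumlk / n$ after normalizing by $n$, and summing over $l = L_k+1,\dots,r-1$ gives the claimed bound $\frac{1}{n}\sum_{l=L_k+1}^{r-1}\gamma_l\sumlk$. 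The expectation is present in the statement only because $\sumlk$ is written in terms of exact counts that depend on $\mathcal{X}$; since $\treex$ and $\treetop$ are deterministic functions of $\mathcal{X}$, the bound actually holds pointwise and a fortiori in expectation — though I would double-check whether the paper intends $\mathcal{X}$ itself to be random (the $\uniform$/$\multifractal$ models), in which case the $\mathbb{E}$ is over the data.

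The main obstacle I anticipate is getting the bookkeeping of the transport plan exactly right — in particular, making sure we do not double-count distances across levels and that the diameter charged at level $l$ is genuinely $\gamma_l$ and not something larger like $\gamma_{l-1}$ or a sum of diameters along a root-to-leaf path. The clean way to handle this is to argue by reverse induction on levels, or to note that the $\treex \to \treetop$ map can be realized so that a point discarded at level $l$ moves \emph{only once}, namely from its true level-$r$ leaf position (in $\treex$) to a uniform point in its level-$(l)$-or-coarser home cell (in $\treetop$), with total displacement bounded by $\gamma_l$; then disjointness of which points are discarded at which level ensures the per-level costs simply add. A secondary subtlety is confirming that the number of discarded candidates is bounded by $k$ (there are at most $2k$ candidates since each of the $\le k$ surviving parents spawns $2$ children, and we keep $k$), which is exactly why the sum runs over ranks $k$ through $2k-1$; I would state this explicitly before invoking the definition of $\sumlk$.
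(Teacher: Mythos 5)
Your proposal is correct and follows essentially the same route as the paper: both identify the pruned mass at each level $l$ as $\sumlk$ (the bottom-$k$ of the $2k$ candidates), charge each unit of that mass a transport distance at most $\gamma_l$ since the pruned cell has diameter $\le\gamma_l$, and sum over levels. Your extra care about why per-level costs add disjointly (each point is pruned at most once) and your observation that the bound actually holds pointwise in $\mathcal{X}$, so the expectation is vacuous at this step, are both valid refinements that the paper leaves implicit.
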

\begin{proof}
    Given the complete binary decomposition of $\mathcal{X}$, the procedure to construct $\treetop$ iterates from level $\lp+1$ to the bottom of the tree, selecting to keep the branches with the Top-$k$ counts. 
    Let $\theta_l^{(i)}$ name the candidate node with the  $i^{\text{th}}$ largest exact cardinality at level $l$.
    If a node $\node$ is pruned, then each point $x\in \subdomain \cap \mathcal{X} $ is abstracted into subdomain $\subdomain$. 
    This is equivalent to redistributing the probability mass of each point in $x\in \subdomain \cap \mathcal{X}$ by moving it a distance of at most $\diam(\subdomain)$. 
    As there are $2k$ candidates at each level of pruning, it follows that
    \begin{align*}
        W_1(\empiricalx, \treetop)
        &\leq 
        \frac{1}{n}\sum_{l=\lp+1}^{L-1} \sum_{x=k+1}^{2k} |\Omega_{\theta_{(x)}}|\cdot\diam(\Omega_{\theta_{(x)}}) 
        \leq \frac{1}{n} \sum_{l=\lp+1}^{L-1} || \tail_k^l||_1\gamma_l
        \leq \frac{|| \tail_k^L||_1}{n}  \cdot \sum_{l=\lp+1}^{L-1} \gamma_l
    \end{align*}
    The final inequality is based on the observation that $|| \tail_k^{l-1}||_1 \leq || \tail_k^l||_1$, as the subdomains that are top-$k$ in level $l-1$ are split into $2k$ subdomains at level $l$, which now compete for top-$k$ membership.
\end{proof}

The remaining steps handle the consistency errors. 
When pruning according to the (noisy and approximate) consistent counts, each consistency error propagates down the tree and affects future pruning decisions.
The main challenge in the proof of the bound for \ref{step:three} involves demonstrating that the influence of each consistency error \textit{decays} as it propagates down the tree.
The following result bounds this utility loss.
\begin{restatable}{lemma}{lemstepthree}
     $\mathbb{E} W_1(\treetop, \treeprune) \lesssim \frac{4}{n}\left(\sum_{l=1}^{\lp} 2\sqrt{2}\frac{\Gamma_{l-1}}{\sigma_l}  + \sum_{l=\lp+1}^{L} (2\sqrt{2}\frac{jk}{\sigma_l}  + {||\tail_k^l||_1} + \frac{n}{2^{j-1}} ) \gamma_{l-1}\right)$
    \label{lem:step3}
\end{restatable}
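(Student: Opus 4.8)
The plan is to bound the Wasserstein distance between $\treetop$ and $\treeprune$ by tracking, level by level, the total probability mass that is ``misallocated'' when the \emph{approximate} Top-$k$ selection in Algorithm~\ref{alg:growpartition} deviates from the \emph{exact} Top-$k$ selection, and then showing that the effect of such a misallocation at level $l$ decays geometrically as it propagates to deeper levels. The natural object to control is, for each level $l > L_k$, the difference between the set of candidate counts that $\treeprune$ carries and the set that $\treetop$ carries. Since $\treeprune$ is built from noisy, sketched counts and $\treetop$ from exact counts, a node that should have survived exact pruning but does not (or vice versa) is ``switched'' only if the total error across the relevant candidate counts exceeds the gap between the $k$-th and $(k{+}1)$-th exact counts; in that case the mass that ends up in the wrong subtree is at most the sum of the absolute errors among the switched candidates. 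This is exactly the quantity $\miss(\node)$ from Section~\ref{sec:error}, whose expectation we have already bounded in Lemma~\ref{lem:miss_error}.

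Concretely, I would first set up a coupling/accounting argument: fix a level $l$ and write the discrepancy between $\treeprune$ and $\treetop$ restricted to level $l$ as a sum over nodes $\node$ of the mass transported out of $\Omega_\theta$ due to a miss at $\node$ or at an ancestor of $\node$. Each unit of mass transported at a node $\node$ with $\level(\node)=l-1$ moves a distance at most $\gamma_{l-1}$ (the diameter of the subdomain at the parent's level), contributing $\gamma_{l-1}\cdot\miss(\node)/n$ to the $W_1$ cost after normalization. Summing over the $k$ hot nodes present at each internal level (there are at most $k$ such nodes for $l-1 > L_k$, and the complete-tree levels $l \le L_k$ each have at most $2^{l-1}$ nodes, contributing the $\Gamma_{l-1}$ terms), and taking expectations via Lemma~\ref{lem:miss_error}, yields
\begin{align*}
    \mathbb{E}\, W_1(\treetop,\treeprune) \le \frac{C}{n}\left(\sum_{l=1}^{L_k} 2\sqrt{2}\sigma_l^{-1}\Gamma_{l-1} + \sum_{l=L_k+1}^{r}\bigl(2\sqrt{2}j\sigma_l^{-1}k + n/w\bigr)\gamma_{l-1}\right),
\end{align*}
where the $n/w$ comes from $\sum_{\text{nodes at level }l}|c_{\theta 0}-c_{\theta 1}|/w \le n/w$ and the factor $C$ (which the statement records as $4$) absorbs the geometric series from the decay argument.

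The step I expect to be the main obstacle is rigorously establishing the \emph{decay} of a miss as it propagates downward: a miss at level $l-1$ perturbs which nodes are declared hot at level $l$, which in turn perturbs the candidate set at level $l+1$, and so on, so naively the errors could compound. The key observation to exploit is that once mass has been (incorrectly) abstracted into a subdomain $\Omega_\theta$ at level $l$, any \emph{further} misallocation of that same mass at deeper levels only moves it \emph{within} $\Omega_\theta$, so the incremental transport distance at level $l'>l$ is at most $\gamma_{l'-1} \le \gamma_{l-1}$, and — more importantly — the $\err_\theta$ bookkeeping in~\eqref{eqn:parent_after} ensures inherited misses are \emph{not} recounted (the definition~\eqref{eqn:miss_definition} of $\miss(\node)$ explicitly excludes ancestor contributions). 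Thus the total cost is a single sum of per-node miss costs rather than a compounding product, and the only place a genuine geometric factor appears is in bounding how a perturbed Top-$k$ set at one level can cascade into an enlarged ``affected'' candidate set below; I would handle this by charging each such cascade back to the originating miss and summing the resulting geometric series $\sum_{i\ge 0} 2^{-i}$ (or similar), which contributes the constant $C$. I would also need to check the boundary case at level $L_k+1$, where the transition from exact counts to sketched counts first introduces the $e_\theta$ terms and hence the $n/w$ contribution, and confirm the correction steps of Algorithm~\ref{alg:consistency} only help (as already argued in the proof of Lemma~\ref{lem:miss_error}).
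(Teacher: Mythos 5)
Your proposal is essentially correct and takes the same route as the paper's proof: misses cause jumps in the approximate Top-$k$ selection, each miss originates a cost at every subsequent pruning level, the per-level contribution decays geometrically, the cascade is charged back to the originating miss and summed (giving the constant $4$), and Lemma~\ref{lem:miss_error} together with $\sum_{\node\in\mathcal{H}_l}|c_{\theta 0}-c_{\theta 1}|/w\le n/w$ supplies the per-node expectations and the $n/w$ term.

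The one place where your sketch is looser than the paper is the mechanism of the geometric decay. The paper's bookkeeping is explicit: after consistency, a miss of size $m$ at $\node$ contributes $\pm m$ to each of $v_{\theta 0},v_{\theta 1}$ (so cost $\le 2m\gamma_{l+1}$ at the next level), and then the inherited error is \emph{halved} at each subsequent consistency step, so at depth $l+x$ one has at most $\min\{2^x,k\}$ affected hot descendants each carrying $\pm m/2^{x-1}$, giving cost $\le 2m\gamma_{l+x}\approx 2m\gamma_{l+1}/2^{x-1}$ and thus a convergent series summing to $4m\gamma_{l+1}$. Your explanation instead appeals to ``mass already confined to $\Omega_\theta$ only moves within $\Omega_\theta$,'' which is a valid intuition about why the series converges (the diameters halve), but on its own it doesn't rule out compounding of the \emph{mass} moved — that comes from the halving of the inherited error via consistency and the $\min\{2^x,k\}$ cap on affected nodes. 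If you make that dilution explicit, the two write-ups coincide.
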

\begin{proof}
    Set sketch width $w=2k$.
    At \ref{step:three}, we adjust the tree structure to reflect the impact of noisy approximate pruning (Figure~\ref{fig:proof_step3}).
    Nodes can ``jump'' into the top-$k$ due to the influence of consistency errors.
    We refer to the \textit{size} of a jump as the difference in exact counts between the jumping node and the true top-$k$ node it displaces. 
    A jump of size $c$ occurs, for some $\theta_1, \theta_2 \in \{0,1\}^{l}$, with $l>\lp$, under the following conditions:
    \begin{itemize}
        \item $v_{\theta_1}.\counter> v_{\theta_2}.\counter$;
        \item $|\Omega_{\theta_1}|+c = |\Omega_{\theta_2}|$ for integer constant $c>0$;
        \item and, when $v_{\theta_2}$ is in the exact top-$k$ and $v_{\theta_1}$ is \textit{not} in the exact top-$k$.
    \end{itemize}
    The \textit{cost} of each jump is at most $c \cdot \gamma_{l}$, representing $c$ additional points that are abstracted into a subdomain at level $l$.
    Therefore, the distance between the sampling distributions defined by $\treetop$ and $\treeprune$ is determined by the size of the jumps that occur due to approximate pruning.
    Further, the size of the jump is bound by the total number of consistency errors that are inherited from ancestors at that node.  
    
    A consistency error can cascade through hot nodes, affecting pruning decisions across multiple levels.
    To formally capture the notion of a specific consistency error propagating down the hierarchy, let $\cost_l(\node)$ represent the utility cost incurred at level $l$ due to the consistency error \textit{originating} at $\node$.
    For example, for a hot node $\node$, at level $l_{\theta} = \level(\node)$, 
    the error $\miss(\node)$ is passed to both candidate subdomains $v_{\theta 0}$ and $v_{\theta 1}$.
    One subdomain will receive $\miss(\node)$ as an addition, allowing it to jump upwards, and the other subdomain will receive $\miss(\node)$ as a subtraction, allowing it to jump downwards.
    Therefore, the utility cost incurred by the consistency error of node $\node$ at level $l_{\theta}+1$ is
    \begin{align*}
        \cost_{l_{\theta}+1}(\node) &\leq \miss(\node) \cdot \gamma_{l_{\theta}+1} + \miss(\node)\cdot \gamma_{l_{\theta}+1} = 2\miss(\node) \cdot \gamma_{l_{\theta}+1}.
    \end{align*}
    If these subdomains are both hot, then, due to consistency, $\miss(\node)$ is propagated to the next level and is evenly redistributed between the child nodes\footnote{See Equation~\eqref{eqn:misses_consistency}, where errors from the parent node are split evenly in the consistent count of the child node.}.
    Thus, $v_{\theta 00}, v_{\theta 01}, v_{\theta 10}, v_{\theta 11}$ contain (as either an addition or subtraction) $\miss(\node)/2$ in their counts.
    If either subdomain is cold, then its portion of $\miss(\node)$ ceases to participate in pruning decisions.
    Assuming the worst-case, where both subdomains are hot,
    \begin{align*}
        \cost_{l_{\theta}+2}(\node) &\leq 4\cdot \frac{\miss(\node)}{2}\gamma_{l_{\theta}+2} = 2\miss(\node) \cdot \gamma_{l_{\theta}+2} \approx \miss(\node) \cdot \gamma_{l_{\theta}+1},
    \end{align*}
    as $\gamma_l \approx 2 \gamma_{l-1}$.
    Repeating this process, assuming, in the worst-case, that all subdomains are hot and continue to propagate $\miss(\node)$, it follows that
    \begin{align*}
        \cost_{l_{\theta}+x}(\node) &\leq \max\{2^x,k\}\cdot \frac{\miss(\node)}{2^{x-1}}\gamma_{l_{\theta}+x} \leq 2\miss(\node) \cdot \gamma_{l_{\theta}+x} \approx 2\miss(\node) \cdot \frac{\gamma_{l_{\theta}+1}}{2^{x-1}},
    \end{align*}
    Accumulating these costs, we get
    \begin{align}
        \cost(\node) = \sum_{x<(L-l_{\theta}-1)}\cost_{l_{\theta}+x}(\node) &\lesssim 4 \miss(\node) \cdot \gamma_{l_{\theta}+1}.
        \label{eqn:cost_node}
    \end{align}
    This applies to nodes $\node$ at levels $l_{\theta}\geq \lp$.
    The same analysis extends to nodes $\node$ at levels $l_{\theta} < \lp$, except that the first level of pruning $\miss(\node)$ participates in is $\lp+1$ (not $l_{\theta}+1$).
    Let $\mathcal{H}_l$ denote the hot nodes at level $l\geq \lp$.
    Adding everything together, we arrive at the following bound for the distance between $\treetop$ and $\treeprune$.
    \begin{align*}
         W_1(\treetop, \treeprune))
         &=
         \sum_{l=0}^{\lp-1} \sum_{\theta:\level(\theta)=l} \cost(\node)
         +
         \sum_{l=\lp}^{L-1} \sum_{\node \in \mathcal{H}_l} \cost(\node) \\
         &\lesssim 
         \sum_{l=0}^{\lp-1} \sum_{\theta:\level(\theta)=l} 4 \miss(\node) \cdot \gamma_{\lp+1}
         +
         \sum_{l=\lp}^{L-1} \sum_{\node \in \mathcal{H}_l} 4 \miss(\node) \cdot \gamma_{l+1} \\
         &\leq \sum_{l=0}^{\lp-1} \sum_{\theta:\level(\theta)=l} 4 \miss(\node) \cdot \diam(\subdomain)
         +
         \sum_{l=\lp}^{L-1} \sum_{\node \in \mathcal{H}_l} 4 \miss(\node) \cdot \gamma_{l+1}
    \end{align*}  
    Taking expectations, using Lemma~\ref{lem:miss_error}, with sketch parameter $w=2k$, we arrive at the following. \allowdisplaybreaks
    \begin{align}
        \mathbb{E}[W_1(\treetop,\treeprune)]
        &\lesssim 
        \frac{4}{n}\left(\sum_{l=0}^{\lp-1} \sum_{\theta \in \{0,1\}^l}  \mathbb{E}[\miss(\node)] \cdot \diam(\subdomain)
         +
         \sum_{l=\lp}^{r-1} \sum_{\node \in \mathcal{H}_l}  \mathbb{E}[\miss(\node)] \cdot \gamma_{l}\right) \nonumber  \\ 
        &\leq 
        \frac{4}{n}\left(\sum_{l=0}^{\lp-1} \frac{2\sqrt{2}}{\sigma_{l+1}} \sum_{\theta \in \{0,1\}^l} \diam(\subdomain) + \sum_{l=\lp}^{L-1} \gamma_l \sum_{\node \in \mathcal{H}_l} \frac{2\sqrt{2}j}{\sigma_{l+1}}  + \frac{||\tail_k^{l+1}|| + 2^{-j+1} n}{k}\right) \nonumber \\ 
        &\leq 
        \frac{4}{n}\left(\sum_{l=0}^{\lp-1} \frac{2\sqrt{2} \Gamma_l}{\sigma_{l+1}} + \sum_{l=\lp}^{L-1} \left( \frac{2\sqrt{2} j k}{\sigma_l} + {||\tail_k^{l+1}||_1+ 2^{-j+1} n}\right)  \gamma_l\right) \nonumber \\
        &\leq 
        \frac{4}{n}\left(\sum_{l=1}^{\lp} \frac{2\sqrt{2} \Gamma_{l-1}}{\sigma_l} + \sum_{l=\lp+1}^{L} \left( \frac{2\sqrt{2} j k}{\sigma_l} + {||\tail_k^l||_1} + 2^{-j+1} n \right) \gamma_{l-1}\right), \label{eqn:bound_step_three}
    \end{align}
    where $\Gamma_l = \sum_{\theta \in \{0,1\}^l}\diam(\subdomain)$.
\end{proof}
\noindent
\ref{step:three} bounds the cost of incorrect pruning decisions that occur due to approximations.
Note that the bounds for exact pruning (\ref{step:two}) and approximate pruning (\ref{step:three}) are both expressed in terms of the norm of the tail of the dataset.
This because errors in the frequency approximations from the sketches are also bound by the tail norm of their input vectors.
This demonstrates that sketches (with width $\mathcal{O}(k)$) compose nicely with pruning.

$\treeprune$ has the same structure as $\treehp$, but not the same counts.
\ref{step:four} introduces these (noisy and approximate) counts and, therefore, accounts for the utility loss incurred due to errors in the sampling probabilities.

\begin{restatable}{lemma}{lemstepfour}
    $\mathbb{E} W_1(\treeprune, \treehp) \leq \frac{1}{n}\left( \sum_{l=0}^{\lp} 2\sqrt{2}\frac{\Gamma_{l-1}}{\sigma_l} + \sum_{l=\lp+1}^{L} (2\sqrt{2}\frac{jk}{\sigma_l} + {||\tail_k^l||_1} + \frac{n}{2^{j-1}}) \gamma_{l-1} \right)$
    \label{lem:step4}
\end{restatable}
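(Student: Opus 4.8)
The plan is to bound $\mathbb{E}[W_1(\treeprune, \treehp)]$ by tracking how the noise and approximation errors, together with the consistency step, perturb the sampling probabilities. Since $\treeprune$ and $\treehp$ share the same tree structure (the same set of nodes at each level), the Wasserstein distance between the two samplers is controlled by the total probability mass that is misallocated across sibling subdomains, weighted by the diameter of the subdomain where the misallocation occurs. This is exactly the notion of a $\miss$ formalized in Section~\ref{sec:error}. So the first step is to express $W_1(\treeprune, \treehp)$ as a sum, over all internal nodes $\node \in \treehp$, of $\miss(\node)$ scaled by the relevant diameter: for a node at level $l$, a miss at that node transfers mass between its two children, each of diameter at most $\gamma_{l+1}$ (or $\diam(\subdomain)$ for the shallow levels $l < L_k$ where we can use the per-node diameter rather than a uniform level bound).

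Next I would split the sum over levels. For levels $l \le L_k$ the tree is complete, so summing $\diam(\subdomain)$ over all nodes at level $l$ gives $\Gamma_l$; for levels $L_k < l \le r-1$ there are at most $k$ hot nodes per level, so the sum over nodes contributes a factor of $k$ against $\gamma_{l+1}$. Then I take expectations and invoke Lemma~\ref{lem:miss_error}, which says $\mathbb{E}[|\miss(\node)|] \le 2\sqrt{2}\sigma_{l+1}^{-1}$ for $\level(\node) < L_k$ and $\mathbb{E}[|\miss(\node)|] \le 2\sqrt{2}\,j\,\sigma_{l+1}^{-1} + |c_{\theta 0} - c_{\theta 1}|/w$ otherwise. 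Substituting these, the shallow-level contribution becomes $\sum_l 2\sqrt{2}\sigma_{l+1}^{-1}\Gamma_l$, and the deep-level contribution splits into a noise part $\sum_l 2\sqrt{2}\,j\,\sigma_{l+1}^{-1} k\,\gamma_{l+1}$ and an approximation part $\frac{1}{w}\sum_l \gamma_{l+1}\sum_{\node \in \mathcal{H}_l} |c_{\theta 0} - c_{\theta 1}|$. The latter I bound by noting $\sum_{\node \in \mathcal{H}_l}|c_{\theta 0} - c_{\theta 1}| \le \sum_{\node \in \mathcal{H}_l} c_\theta \le n$, since the hot nodes at a level are disjoint subdomains. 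Finally, reindexing $l \mapsto l-1$ and dividing by $n$ (the normalization of the empirical measure) yields the stated bound.

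Unlike \ref{step:three}, here there is no propagation to worry about: the counts in $\treehp$ are fixed once the consistency step finishes, so a miss at a node affects only that node's two children and does not cascade. This means the factor of $4$ and the geometric-decay argument from Lemma~\ref{lem:step3} are absent, which is precisely why the bound in Lemma~\ref{lem:step4} looks like a cleaner, single-pass version of \eqref{eqn:bound_step_three}. The main obstacle I anticipate is the bookkeeping at the boundary level $l = L_k$: a miss there is governed by the "otherwise" branch of Lemma~\ref{lem:miss_error} (it involves the sketch at level $L_k+1$), and one has to be careful that the $\Gamma$ terms and $\gamma$ terms are assigned to the correct level so that the two halves of the sum meet cleanly at $L_k$ and $L_k+1$; getting the index shift right is the only delicate part, and it is routine once the miss-to-diameter correspondence is set up correctly.
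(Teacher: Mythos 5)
Your proposal captures the core of the argument — express the distance as a sum of $\miss(\node)$ weighted by a diameter, split at level $L_k$, invoke Lemma~\ref{lem:miss_error}, bound $\sum_{\node\in\mathcal{H}_l}|c_{\theta0}-c_{\theta1}|\leq n$, and observe there is no cascading unlike in~\ref{step:three} — and this matches how the paper handles what it calls the cost $C$. However, there is a genuine gap: you have not accounted for the root node. The consistency step only redistributes error between sibling pairs; it never touches the root's count, which carries raw noise $\lambda_{\varnothing}$ and so $\treehp$ encodes $m = n + \lambda_{\varnothing}$ units of mass while $\treeprune$ encodes exactly $n$. Your miss-to-diameter sum, after reindexing $l\mapsto l+1$, runs from $l=1$ to $r$ and therefore never produces the $l=0$ term $\frac{2\sqrt{2}\sigma_0^{-1}\Gamma_{-1}}{n}$ that appears in the stated bound. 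That term is precisely the cost of the mass mismatch at the root.

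The paper handles this by a two-step decomposition rather than a single accounting sweep. It first passes from $\treeprune$ to an intermediate $\treeprune^{\prime}$ of total mass $m$ by adding or removing $|n-m|$ points, bounding the cost via Lemma~\ref{lem:setminus_wass} (the bound $W_1(\mu_U,\mu_V)\leq \frac{|V\setminus U|}{|V|}\diam(\Omega)$ for $U\subseteq V$), with a case split on the sign of $\lambda_{\varnothing}$ because the lemma only applies to a multiset containment in one direction. Only then does it transport mass inside the tree by moving $\miss(v_{\theta})$ between siblings, yielding your sum $C$. Relatedly, your opening sentence (``the Wasserstein distance is controlled by the total probability mass misallocated across siblings, weighted by diameter'') is the right intuition but is not yet a proof: you would need to exhibit an explicit coupling or transport plan realizing it, which is exactly what the two-step decomposition supplies. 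If you add the normalization step and the Lemma~\ref{lem:setminus_wass} argument, your calculation goes through and recovers the paper's bound.
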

\noindent
The proof is available in Appendix~\ref{sec:app_step_four} and, similar to Lemma~\ref{lem:step3}, entails quantifying the consistency errors and registering where they occur.
With a bound on the cost of each step in the proof pipeline, we can proceed with the proof of Theorem~\ref{thm:utilityHP}.
\begin{proof}[Proof of Theorem~\ref{thm:utilityHP}]
    By the triangle inequality, the following holds
    \begin{align*}
        \mathbb{E}[W_1(\empiricalx, \treehp)] 
        &\leq
        \mathbb{E}[W_1(\empiricalx, \treetop)] +  \mathbb{E}[W_1(\treetop, \treeprune)] +  \mathbb{E}[W_1(\treeprune, \treehp)]
    \end{align*}
    By Lemmas~\ref{lem:step2}, \ref{lem:step3}, and \ref{lem:step4}, and the prior observation that $||\tail_k^{l-1}||_1\leq ||\tail_k^{l}||_1$, this evaluates as:
    \begin{align*}
        \mathbb{E}[W_1(\empiricalx, \treehp)] 
        &\lesssim 
        \frac{1}{n}\left(\sum_{l=0}^{\lp} \frac{10\sqrt{2} \Gamma_{l-1}}{\sigma_l} + \sum_{l=\lp+1}^{L} \left(\frac{10\sqrt{2} j k}{\sigma_l} + 5 \left({||\tail_k^l||_1} + \frac{n}{2^{j-1}}\right)\right) \gamma_{l-1} + ||\tail_k^L||\sum_{l=\lp+1}^{L-1} \gamma_l \right)  \\
        &\lesssim
        \frac{10\sqrt{2}}{n}\left(\sum_{l=0}^{\lp} \frac{\Gamma_{l-1}}{\sigma_l}  + \sum_{l=\lp+1}^{L} \frac{jk\gamma_{l-1}}{\sigma_l}  \right) + 6\left(\frac{||\tail_k^L||_1}{n} + 2^{-j+1}\right)  \sum_{l=\lp+1}^{L} \gamma_{l-1} 
    \end{align*}
\end{proof}

\section{Proof of Corollary~\ref{cor:hypercube}}
\label{sec:proof_hypercube}

Before proceeding with the proof, we introduce a  useful result
related to the hypercube $\Omega = [0,1]^d$, which is the domain of Corollary~\ref{cor:hypercube}.
The following Lemma bounds the sum of the hypercube subdomain diameters across the pruned levels. 
\begin{lemma}
    On input domain $\Omega=[0,1]^d$, $d \in \mathrm{Z}^{+}$, privacy budget $\varepsilon>0$, and hierarchy depth $L=\log \varepsilon n$ and pruning depth $\lp\geq \log k$,
    \begin{align*}
        {\sum_{l=\lp+1}^{L}}\gamma_{l-1}
        &= \mathcal{O}\left(2^{-\lp/d}\right).
    \end{align*}
    \label{lem:gammal_bound}
\end{lemma}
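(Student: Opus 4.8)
The plan is to compute the diameters $\gamma_l$ explicitly for the standard dyadic hierarchy on $[0,1]^d$ under the $l^\infty$ metric, and then sum a geometric-type series. First I would recall how the binary hierarchical decomposition of $[0,1]^d$ proceeds: at each level we bisect one coordinate (cycling through the $d$ coordinates), so after $d$ consecutive levels every coordinate has been halved once, and the side lengths of a level-$l$ subdomain are either $2^{-\lceil l/d\rceil}$ or $2^{-\lfloor l/d\rfloor}$ depending on the coordinate. Under $l^\infty$, the diameter of a box is its longest side, so $\gamma_l = \max_{\theta\in\{0,1\}^l}\diam(\Omega_\theta) = 2^{-\lfloor l/d\rfloor}$. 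Thus $\gamma_{l-1} = 2^{-\lfloor (l-1)/d\rfloor}$, which decays geometrically by a factor of $2$ every $d$ steps.

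Next I would bound the sum $\sum_{l=L_k+1}^{r}\gamma_{l-1} = \sum_{l=L_k+1}^r 2^{-\lfloor(l-1)/d\rfloor}$. Grouping the terms in blocks of $d$ consecutive levels (each block contributing at most $d$ copies of the same value $2^{-m}$ for $m$ ranging over $\lfloor L_k/d\rfloor, \lfloor L_k/d\rfloor+1,\dots$), the sum is at most $d\sum_{m\geq \lfloor L_k/d\rfloor} 2^{-m} = \mathcal{O}(d\cdot 2^{-L_k/d})$. Since $L_k = \lfloor\log k\rfloor$, we have $2^{-L_k/d} = \Theta(k^{-1/d})$, giving the claimed $\mathcal{O}(k^{-1/d})$ (with the $d$ absorbed into the constant, consistent with the paper's convention of treating $d$ as fixed). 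The upper limit $r=\log\varepsilon n$ only matters in that the tail beyond it is dropped, which can only decrease the sum, so the geometric bound is unaffected.

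There is really no serious obstacle here — the statement is essentially a bookkeeping exercise about dyadic partitions. The one point requiring a little care is making the floor/cycling argument precise: one must confirm that the worst-case (largest) diameter at level $l$ is governed by the coordinate that has been split the fewest times, namely $\lfloor l/d\rfloor$ times, and that the block-grouping of the geometric series is valid regardless of where $L_k+1$ falls within a block of $d$ levels. I would handle this by writing $l-1 = qd + s$ with $0\le s < d$ and bounding $\sum 2^{-q}$ directly, noting each value of $q$ is hit at most $d$ times. The rest is a standard geometric-series estimate.
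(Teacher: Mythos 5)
Your proof is correct, and it reaches the same conclusion as the paper's but by a genuinely different route. The paper works with the approximation $\gamma_l \asymp 2^{-l/d}$ directly, applies the closed-form geometric-series formula with common ratio $\rho = 2^{-1/d}$ to obtain $k^{-1/d}\cdot\frac{1-(k/\varepsilon n)^{1/d}}{1-2^{-1/d}}$, and then invokes L'H\^opital's rule (treating $d$ as the limiting variable) to argue the dimensional-dependent fraction is bounded. You instead take $\gamma_l = 2^{-\lfloor l/d\rfloor}$ exactly, write $l-1 = qd+s$, group the sum into blocks of $d$ consecutive levels, and reduce to a plain geometric series in $q$ with ratio $1/2$, yielding $\mathcal{O}(d\cdot 2^{-L_k/d}) = \mathcal{O}(d\cdot k^{-1/d})$. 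Your block-grouping argument buys two things: it avoids the somewhat delicate L'H\^opital step entirely (where the paper's claim of a uniform constant $C$ is actually implicitly allowing $C$ to grow with $d$ or with $\log(\varepsilon n)$, since $1/(1-2^{-1/d})\sim d/\ln 2$), and it makes the $d$-dependence of the hidden constant fully explicit rather than buried in a limit. On the other hand, the paper's form tracks the upper limit $r$, which your argument harmlessly discards by extending the sum to infinity. Both are valid; yours is the more elementary and the more transparent about constants, which matters if one cares about the dimension-dependence of the utility bound.
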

\begin{proof}
    Let $\Omega=[0,1]^d$ equipped with the $l_{\infty}$ metric.
    The natural hierarchical binary decomposition of $[0,1]^d$ (cut through the middle along a coordinate hyperplane) makes subintervals of length $\diam(\Omega_{\theta}) = \gamma_l \asymp 2^{-l/d}$, for $\theta \in \{0,1\}^l$. 
    Note that $\sum_{l={\lp+1}}^L \gamma_{l-1} = \sum_{l=\lp}^{\log(\varepsilon n)-1} 2^{-l/d}$ is a finite geometric series with common ratio $\rho=2^{-1/d}$.
    Therefore, we can rewrite the sum using the following formula.
    \begin{align}
        \sum_{l=a}^b \rho^l= \frac{\rho^a(1-\rho^{b-a+1})}{1-\rho},
        \label{eqn:geometric}
    \end{align}
    where $b=\log (\varepsilon n)-1$ and $a = \lp \geq \log k$.
    Therefore, we get
    \begin{align*}
        \sum_{l=\lp}^{\log(\varepsilon n)-1} 2^{-l/d} 
        &= 
        2^{-\lp/d}\cdot \frac{1-2^{-(\log \varepsilon n - \lp)/d}}{1-2^{-1/d}}
        \leq 
        k^{-1/d} \cdot \frac{1-\left( \frac{k}{\varepsilon n}\right)^{1/d}}{1-2^{-1/d}}
    \end{align*}
    To evaluate the asymptotics of $\frac{1-\left( \frac{k}{\varepsilon n}\right)^{1/d}}{1-2^{-1/d}}$, we look at its behavior as $d$  and $n$ increase.
    Clearly, this fraction approaches a constant (for fixed $d$) as $n \rightarrow \infty$.
    To determine whether this constant depends on $d$, we need to determine the behavior of the fraction when $d\rightarrow \infty $.
    Both the numerator and the denominator approach $0$ as $d$ increases.
    Therefore, we apply L'Hopital's rule to find the limit.
    Differentiating the numerator, we get
    \begin{align}
        \frac{d}{dd}\left(1-\left(\frac{k}{\varepsilon n}\right)^{1/d} \right)
        &= 
        \frac{\ln\left(\frac{k}{\varepsilon n}\right)\cdot \left(\frac{k}{\varepsilon n}\right)^{1/d}}{d^2}.
        \label{eqn:diff_numerator}
    \end{align}
    Differentiating the denominator, we get
    \begin{align}
        \frac{d}{dd}(1-2^{-1/d}) = -\frac{\ln 2 \cdot 2^{-1/d}}{d^2}
        \label{eqn:diff_denominator}
    \end{align}
    Combining~\eqref{eqn:diff_numerator} and~\eqref{eqn:diff_denominator} with L'Hopital's rule, we get
    \begin{align*}
        \lim_{d \rightarrow \infty} \frac{1-\left( \frac{k}{\varepsilon n}\right)^{1/d}}{1-2^{-1/d}}
        &=
        \lim_{d \rightarrow \infty} - \frac{\ln\left(\frac{k}{\varepsilon n}\right)\cdot \left(\frac{k}{\varepsilon n}\right)^{1/d}}{\ln 2 \cdot 2^{-1/d}} 
        = -\frac{\ln\left(\frac{k}{\varepsilon n}\right)}{\ln 2}
    \end{align*}
    This is a constant for fixed $n$.
    The fraction approaches a constant as either $d$ or $n$ approaches infinity.
    Therefore, for all values of  $d$ and $n$, $\frac{1-\left( \frac{k}{\varepsilon n}\right)^{1/d}}{1-2^{-1/d}}$ is bound above by some value $C\geq 0$. 
\end{proof}
\noindent
We are now ready to proceed with the Proof of Corollary~\ref{cor:hypercube}.
Recall that Corollary~\ref{cor:hypercube} is an extension of Theorem~\ref{thm:utilityHP} on the hypercube.

\corollaryhypercube*
    \begin{proof}
    We set the sketch depth $j = \lceil\log n \rceil$ and the hierarchy depth $L=\log \varepsilon n$.
    Therefore, with sketch width $w=2k$, each sketch occupies $\mathcal{O}(k \log n)$ words of memory.
    As there are at most $L = \mathcal{O}(\log \varepsilon n)$ sketches, the memory occupied by the sketches is $\mathcal{O}(k\log^2 n)$.
    If we set $\lp = \mathcal{O}(\log M) = \mathcal{O}(\log (k\log^2 n)$), then the memory occupied by the tree of exact counts is also $\mathcal{O}(k\log^2 n)$.
    Therefore, the total memory requirement of $\privhp$ is $M = \mathcal{O}(k\log^2 n)$.

    Extending Theorem~\ref{thm:utilityHP} to $\Omega=[0,1]^d$, we proceed by bounding the noise, approximation and resolution terms separately.
    Starting with the approximation term.
    Utilizing Lemma~\ref{lem:gammal_bound} for our choice of $\lp$, 
    For a sufficiently large constant $C_1\geq 0$:
    \begin{align}
        \approximation \leq C_1 \left(\frac{||\tail_k^L||}{n} + 2^{-j}\right) \sum_{l=\lp+1}^L\gamma_{l-1}
        &=
        C_1\left(\frac{||\tail_k^L||}{n} + \frac{1}{n}\right) \mathcal{O}(2^{-\lp/d}) 
        =  \mathcal{O}\left(\frac{||\tail_k^L||}{M^{1/d}n}\right) 
        \label{eqn:bound_approximation}
    \end{align}
    We now look at the noise term separately for $d=1$ and $d\geq 2$.
    For $\Omega=[0,1]$ equipped with the $l_{\infty}$ metric,
    the natural hierarchical binary decomposition of $[0,1]$ makes sub-intervals of length $\diam(\Omega_{\theta}) = \gamma_l = 2^{-l}$ for $\theta \in \{0,1\}^l$.
    Therefore, $\Gamma_l = 1$.
    The following holds for some constant $C_2 \geq 0$.
    \begin{align} 
        \noise^{d=1} \leq  \frac{C_2}{\varepsilon n} \left( \sum_{l=0}^{\lp}  \sqrt{\Gamma_{l-1}}  + \sum_{l=\lp+1}^L \sqrt{kj\gamma_{l-1}} \right)^2 
        &= 
        \frac{C_2}{\varepsilon n} \left( \sum_{l=0}^{\lp} 1 + (k\log n)^{1/2}\sum_{l=\lp+1}^{\log (\varepsilon n)} { 2^{-(\lp-1)/2}} \right)^2 \nonumber\\
        &= \frac{C_2}{\varepsilon n} \left(\lp + (k\log n)^{1/2}\mathcal{O}(2^{-\lp/2}) \right)^2 \nonumber\\
        &= \frac{C_2}{\varepsilon n} \left( \log M + \mathcal{O}\left(\frac{(k\log n)^{1/2}}{(k \log^2 n)^{1/2}}\right) \right)^2 \nonumber\\
        &= \mathcal{O}(\log^2(M)/(\varepsilon n))
        \label{eqn:noise_bound}
    \end{align}
    Combining \eqref{eqn:noise_bound} and \eqref{eqn:bound_approximation}, for $\Omega = [0,1]$, we get:
    \begin{align*}
        \mathbb{E}[W_1(\empiricalx, \treehp)] 
        &= 
        \noise^{d=1} + \approximation 
        =
        \mathcal{O}\left(\frac{\log^2(M)}{\varepsilon n} + \frac{||\tail_k^{\varepsilon n}||}{Mn}\right)
    \end{align*}
    
    For $\Omega=[0,1]^d$, the natural hierarchical binary decomposition of $[0,1]^d$ makes subintervals of length $\diam(\Omega_{\theta}) = \gamma_l \asymp 2^{-l/d}$, for $\theta \in \{0,1\}^l$.
    Therefore, $\Gamma_l =  2^l\cdot 2^{-l/d}= 2^{(1-1/d)l}$.
    We follow the same procedure as above, bounding each term in Theorem~\ref{thm:utilityHP} separately.
    By Lemma~\ref{lem:opt_acc}, \allowdisplaybreaks
    \begin{align}
        \noise^{d\geq 2} \leq
        \frac{C_1}{\varepsilon n} \left( \sum_{l=0}^{\lp}  \sqrt{\Gamma_{l-1}} + \sum_{l=\lp+1}^L \sqrt{kj\gamma_{l-1}} \right)^2
        &= 
        \frac{C_1}{\varepsilon n} \left( \sum_{l=0}^{\lp}  \sqrt{ 2^{(1-1/d)l} } + (k\log n)^{1/2}\sum_{l=\lp+1}^{L} {2^{-l/(2d)}} \right)^2 \nonumber\\
        &=  
        \frac{C_1}{\varepsilon n}\left( \mathcal{O}\left(2^{\frac{1}{2}(1-\frac{1}{d})\lp }\right) + (k\log n)^{1/2}\mathcal{O}(2^{-\lp/(2d)}) \right)^2 \nonumber\\
        &= 
        \mathcal{O}\left(\frac{M^{(1-\frac{1}{d})/2}}{\varepsilon n} + \frac{(k\log n)^{1/2}}{\varepsilon n (k \log^2 n)^{1/(2d)}} \right)^2 \nonumber\\
        &=\mathcal{O}\left(\frac{M^{(1-\frac{1}{d})}}{\varepsilon n}\right)
        \label{eqn:01d_noise}
    \end{align}
    The second line comes from Lemma~\ref{lem:gammal_bound}, by setting the dimension to $2d$.
    Combining \eqref{eqn:01d_noise} and \eqref{eqn:bound_approximation}, for $\Omega = [0,1]^d$, we get
    \begin{align*}
        \mathbb{E}[W_1(\empiricalx, \treehp)]
        = \noise^{d\geq 2} + \approximation 
        &=
        \mathcal{O}\left( \frac{M^{(1-\frac{1}{d})}}{\varepsilon n}+  \frac{||\tail_k^{\varepsilon n}||}{M^{1/d}n}\right)
    \end{align*}
    
    For the time complexity, at each update, $x\in \mathcal{X}$ performs a root to leaf traversal, updating the counter for each node.
    Both exact counters are updated in constant time and approximate counters are updated in $\mathcal{O}(\log n)$.
    Therefore the cost of an update is $\mathcal{O}(L \cdot \log n)= \mathcal{O}(\log^2 n)$.

    The partition tree is built one level at a time.
    The noisy frequency estimates for each node can each be retrieved in $\mathcal{O}(j) = \mathcal{O}(\log n)$ time.
    At the first level of pruning $\lp$, $\mathcal{O}(2^{\lp}) = \mathcal{O}(M)$ candidates are retrieved in $\mathcal{O}(M \log n)$ time.
    Then $\mathcal{O}(M)$ consistency steps are performed, each in constant time.
    To retrieve the top-$k$, they are sorted in $\mathcal{O}(M \log M) = o (M \log n)$ time.
    
    The remaining levels $\mathcal{O}(L-\lp)= \mathcal{O} (\log n) $ levels output $2k$ candidates.
    Each frequency estimate is retrieved in $\mathcal{O}(\log n)$ and made consistent in constant time.
    Once the estimates are retrieved, they need to be sorted so that the bottom-$k$ can be pruned.
    Sorting takes $\mathcal{O}(k \log k)$ time.
    Therefore each remaining level is processed in $\mathcal{O}(k \log n)$ time.
    As there are $\mathcal{O}(\log n)$ levels $l>\lp$, this process requires $\mathcal{O}(k \log^2 n )$ time.
    Therefore, the whole process completes in $\mathcal{O}(M \log n)$ time.

\end{proof}

\section{Conclusion}
\label{sec:conclusion}

In summary, $\privhp$ offers a novel approach to differentially private synthetic data generation by leveraging a hierarchical decomposition of the input domain within bounded memory constraints. 
Unlike existing methods, $\privhp$ provides a principled trade-off between accuracy and space efficiency, balancing hierarchy depth, noise addition, and selective pruning to preserve high-frequency subdomains.
Our theoretical analysis establishes rigorous utility bounds, demonstrating that $\privhp$ achieves competitive accuracy with significantly reduced memory usage compared to state-of-the-art methods.
By introducing the pruning parameter $k$, our approach enables fine-grained control over the trade-off between space and utility, making $\privhp$ a flexible and scalable solution for private data summarization in resource-constrained environments.

\bibliographystyle{unsrt}  
\bibliography{bib}

\appendix

\section{Proof of Lemma~\ref{lem:cms_expected_error}}
\label{sec:app_sketch}

We assume that the hash functions are fully random.
\lemsketch*
\begin{proof}
    Fix an index $x$ and let $\textsf{Top-}w(v)$ denote the set of coordinates in $v$ with the $w$ highest magnitudes.
    For a given level $ i \in [j]$ we can bound the probability that $x$ does not collide with an index in $\textsf{Top-}w(v)$:
    \begin{align*}
        \textsf{Pr}[ h_i(x) \neq h_i(y), \forall y \in \textsf{Top-}w(v)\setminus \{x\} ] 
        &= \prod_{y \in \textsf{Top-}w(v)\setminus \{x\}} \textsf{Pr}[h_i(x) \neq h_i(y)] \\
        &= \left( \frac{2w-1}{2w} \right)^w, 
    \end{align*} 
    as the sketch has width $2w$. 
    Taking the logarithm of both sides we get, 
    \begin{align*}
        \ln \textsf{Pr}[ h_i(x) \neq h_i(y), \forall y \in \textsf{Top-}w(v)\setminus \{x\} ] 
        &= w \ln \left(1- \frac{1}{2w} \right) \\
        &> w \cdot \left(- \frac{1}{2w} - \left(\frac{1}{2w}\right)^2\right) 
        = -\frac{1}{2} - \frac{1}{4w},
    \end{align*}
    where the inequality comes from the fact that $\ln(1-z) > -z - z^2 $ for $z\in(0,0.5)$.
    For $w\geq 1$, we have $-\frac{1}{2} - \frac{1}{4w} > - \ln 2$.
    Thus,
    \begin{align*}
      \textsf{Pr}[ h_i(x) \neq h_i(y), \forall y \in \textsf{Top-}w(v)\setminus \{x\} ] &>e^{-\ln(2)} = \frac{1}{2}.
    \end{align*}
    Therefore, the probability that an index does not collide with a $\textsf{Top-}w$ index is greater than a half. 
    Using a standard argument with Chernoff bounds, across $ j $ levels, in \textit{at least one level} $x$ has no collisions with the $\textsf{Top-}w$ indices with probability greater than $1-2^{-j}$.

    We now look at the probability that some some $y \in \textsf{Top-}w(v)\setminus \{x\}$ observes a collision with $x$, conditioned on the event that at least one index in $ \textsf{Top-}w(v)\setminus \{x\}$ collides with $x$.
    Let $\mathcal{E}_{i}$ denote the event: $\exists y^{\prime}\in \textsf{Top-}w(v)\setminus \{x\} : h_{i}(x)=h_{i}(y^{\prime})$.
    Then $\forall y \in \textsf{Top-}w(v)\setminus \{x\}$,
    \begin{align}
        \textsf{Pr}[h_i(x) = h_i(y) | \mathcal{E}_i] 
        & = \frac{\textsf{Pr}[h(i(x) = h_i(y) \cap \mathcal{E}_i] }{\textsf{Pr}[ \mathcal{E}_i] } 
        = \frac{1/(2w)}{(1-1/(2w))^w} \approx  \frac{1/(2w)}{e^{-1/2}} < \frac{2}{w}
        \label{eqn:coll_bound}
    \end{align}
    Let $i^*$ represent the level where $x$ has the fewest collisions with the $\textsf{Top-}w$ indices and $\hat{v}_x^*$ denote the estimate taken from level $i^{*}$.
    As $i^*$ has the fewest collisions, it holds that, $\forall y \in \textsf{Top-}w(v)\setminus \{x\}$ and $ i \neq i^{*}$, $\textsf{Pr}[h_{i^*}(x) = h_{i^*}(y) | \mathcal{E}_{i^*}] \leq \textsf{Pr}[h_{i}(x) = h_{i}(y) | \mathcal{E}_{i}]$ and can be bound by~\eqref{eqn:coll_bound}.  
    It follows that,
    \begin{align*}
        \mathbb{E}[\hat{v}_x] 
        &\leq \mathbb{E}[\hat{v}^*_x] \\
        &\leq v_x + \mathbb{E}\left[\sum_{y \in \textsf{tail}_w(v)} \mathbbm{1}[h_{i^*}(x) = h_{i^*}(y)]\cdot v_y\right] + \textsf{Pr}[\mathcal{E}_{i^*}] \left(\mathbb{E}\left[\sum_{y \in \textsf{head}_w(v)} \mathbbm{1}[h_{i^*}(x) = h_{i^*}(y) \mid ]\cdot v_y \mid \mathcal{E}_{i^*} \right] \right) \\
        &\leq v_x + \sum_{y \in \textsf{tail}_w(v)} \textsf{Pr}[h_{i^*}(x) = h_{i^*}(y)] \cdot v_y + 2^{-j} \cdot \sum_{y \in \textsf{head}_w(v)} \textsf{Pr}[h_{i^*}(x) = h_{i^*}(y) \mid \mathcal{E}_{i^*}] \cdot v_y \\
        &= v_x + \frac{||\tail_w(v)||_1}{2w} + 2^{-j} \cdot \frac{2\cdot \sum_{y \in \textsf{head}_w(v) } v_y}{w} \\
        &\leq v_x + \frac{\sum_{y \in \textsf{tail}_w(v) } v_y + 2^{-j+1} ||v||_1}{w}.
    \end{align*} 
\end{proof}
Subtracting $v_x$ from both sides completes the proof.

\section{Proof of Lemma~\ref{lem:opt_acc}}
\label{sec:app_opt_sigma}

We now evaluate the optimal allocation of the privacy budget across levels in the hierarchy.
\thmoptacc*

\begin{proof}
    Following \cite{he2023algorithmically}, we will use Lagrange multipliers to find the optimal choices of the $\{\sigma_l\}$.
    With a partition of depth $L$, we are subject to a privacy budget of $\varepsilon = \sum_{l=0}^{L} \sigma_l$.
    Therefore, as we aim to minimize the accuracy bound subject to this constraint, we end up with the following optimization problem.
    \begin{align*}
        \min \mathbb{E}[W_1(\empiricalx, \treehp)] \qquad \text{s.t} \; \varepsilon = \sum_{l=0}^{L} \sigma_l.
    \end{align*}
    With parameters $n,\varepsilon, k,L$ fixed in advance and $\delta_L$ dependent only on $L$, this optimization problem is equivalent to 
    \begin{align*}
        \min \left(\sum_{l=0}^{\lp}  \frac{\Gamma_{l-1}}{\sigma_l} + \sum_{l=\lp+1}^{L}\frac{jk\gamma_{l-1}}{\sigma_{l}}\right) \qquad \text{s.t} \; \varepsilon = \sum_{l=0}^{L} \sigma_l.
    \end{align*}
    Now, consider the Lagrangian function
    \begin{align*}
        f(\sigma_0,\ldots, \sigma_l, t)&:= \left( \sum_{l=0}^{\lp}  \frac{\Gamma_{l-1}}{\sigma_l}  + \sum_{l=\lp+1}^{L}  \frac{jk\gamma_{l-1}}{\sigma_{l}}\right) - t\left(  \sum_{l=0}^{L} {\sigma_l} -\varepsilon\right),
    \end{align*}
    and the corresponding equation
    \begin{align*}
       \frac{\delta f}{\delta \sigma_0} = \cdots =  \frac{\delta f}{\delta \sigma_r} =  \frac{\delta f}{\delta t} = 0.  
    \end{align*}
    One can easily check that the equations have the following unique solution
    \begin{align}
        \sigma_l 
        &=
        \begin{cases}
            \frac{\varepsilon \sqrt{\Gamma_{l-1}}}{S} & l\leq \lp \\
            \frac{\varepsilon \sqrt{jk\gamma_{l-1}}}{S} &  \text{ Otherwise}
        \end{cases}
        \qquad \text{where} \; S= \sum_{l=0}^{\lp}  \sqrt{\Gamma_{l-1}} + \sum_{l=\lp+1}^{L} \sqrt{jk\gamma_{l-1}}.
        \label{eqn:opt_sigma}
    \end{align}
    This states that the amount of noise per level is inversely proportional to its effect on the utility of the partition.
    Substituting the optimized values of $\{\sigma_l\}$ into $\noise$ in~\eqref{eqn:utility_bound}, we get, for some constant $C\geq 0$,
    \begin{align*}
        \noise  
        &\leq 
        \frac{C}{n}\left(\sum_{l=0}^{\lp} \frac{\Gamma_{l-1}}{\sigma_l} + \sum_{l=\lp+1}^{L} \frac{jk\gamma_{l-1}}{\sigma_{l}} \right)  \\
        &=
        \frac{C}{n} \left(\sum_{l=0}^{\lp}\frac{S \Gamma_l}{\varepsilon \sqrt{\Gamma_{l-1}}} + \sum_{l=\lp+1}^{L} \frac{Sjk\gamma_l}{\varepsilon \sqrt{jk\gamma_{l-1}}} \right)  \\ 
        &= \frac{C\cdot S}{\varepsilon n} \left( \sum_{l=0}^{\lp}  \sqrt{\Gamma_{l-1}} + \sum_{l=\lp+1}^{L} \sqrt{jk\gamma_{l-1}} \right)  \\
        &= \frac{C}{\varepsilon n} \left( \sum_{l=0}^{\lp}  \sqrt{\Gamma_{l-1}} + \sum_{l=\lp+1}^{L} \sqrt{jk\gamma_{l-1}} \right)^2,
    \end{align*}
    which completes the proof.
\end{proof}

\section{Proof of Lemma~\ref{lem:miss_error}}
\label{sec:app_proofs6}
\lemmiss*
\begin{proof}
    We will continue our accounting approach, disaggregating the consistent counts in child nodes into exact counts, the consistency error and errors higher in the hierarchy.
    Depending on whether error correction is used during consistency, we have three cases to consider:
    \begin{enumerate}[label=\texttt{Case (\arabic*)}, leftmargin=2cm]
        \item\label{case:one} No error correction is used;
        \item\label{case:two} $\mathtt{Correction 1}$ is used (Algorithm~\ref{alg:consistency} Line~\ref{line:correction1});
        \item\label{case:three} $\mathtt{Correction 2}$ is used (Algorithm~\ref{alg:consistency} Line~\ref{line:correction21}).
    \end{enumerate}
    $\mathtt{Correction 1}$ and $\mathtt{Correction 2}$ have the effect of reducing the amount of error in the node counts.
    Therefore, they cannot increase the number of misses in a node.
    We prove this notion formally and consider each case separately.
    \paragraph*{\ref{case:one}}
    When no error correction is used, $\miss(\node)$ is defined in~\eqref{eqn:miss_definition}.
    Taking expectation, we get
    \begin{align}
        \mathbb{E}[ \miss(v_{\theta})] &= \mathbb{E}[|(\lambda_{\theta0}- \lambda_{\theta1} + e_{\theta0}-e_{\theta1})/2|] \nonumber \\
        &\leq \frac{1}{2}(\mathbb{E}[|\lambda_{\theta0}- \lambda_{\theta1}|] + \mathbb{E}[|e_{\theta0}-e_{\theta1}|]) \label{eqn:bound_conserr} \\
        &\leq \mathbb{E}[ \max \{|\lambda_{\theta0}|, |\lambda_{\theta1}|\}] + \frac{1}{2} (\mathbb{E}[|e_{\theta 0}|]  + \mathbb{E}[|e_{\theta 1}]) \nonumber \\
        &\leq \mathbb{E}[ \max \{|\lambda_{\theta0}|, |\lambda_{\theta1}|\}] + \frac{||\textsf{tail}_w^{l+1}||}{2w}  + 2^{-j+1}n \label{eqn:miss_ubound}   
    \end{align}
    The third inequality follows from Lemma~\ref{lem:cms_expected_error}.
    As the $\lambda_{\theta0}$ and $\lambda_{\theta 1}$ are independent Laplace variables with noise defined in~\eqref{eqn:laplace_parameters}, the following inequality completes the upper bound for case 1.
    \begin{align*}
        \mathbb{E}[\max \{|\lambda_{\theta0}|, |\lambda_{\theta1}|\}] \leq 
        \begin{cases}
             2\sqrt{2}\sigma_l^{-1} & l\leq \lp \\
              2\sqrt{2}\sigma_l^{-1} \cdot j & \text{Otherwise}
        \end{cases}
    \end{align*}
    \paragraph*{\ref{case:two}}
    We focus on a correction made to $v_{\theta0}$.
    A parallel argument can be made for $v_{\theta1}$.
    In $\mathtt{Correction 1}$, $v_{\theta0}.\counter^{\text{before}}$ is set to $0$ if it is negative.
    As $c_{\theta0}, e_{\theta0}\geq 0$, this can only happen if $\lambda_{\theta 0}<0$.
    Therefore, under our accounting approach, the correction $v_{\theta0}.\counter^{\text{before}} \gets 0$ is made possible if $\lambda_{\theta 0}$ is changed to some value $|\lambda_{\theta 0}^{\prime}|\leq |\lambda_{\theta 0}|$.
    Inserting this value into Inequality~\eqref{eqn:miss_ubound} has the effect of reducing the bound on the number of misses.
    \paragraph*{\ref{case:three}}
    As above, we focus on a correction made to $v_{\theta0}$.
    A parallel argument can be made for $v_{\theta1}$.
    $\mathtt{Correction 2}$ is triggered on node $v_{\theta 0}$, when 
    \[
    v_{\theta 0}.\counter^{\text{before}} - \Lambda/2 < 0.
    \]
    By~\eqref{eqn:misses_consistency}, this implies that
    \begin{align}
        c_{\theta0} + (\lambda_{\theta0}- \lambda_{\theta1})/2 + (e_{\theta0}-e_{\theta1})/2 +\err_{\theta}/2 < 0.
        \label{eqn:correction2}
    \end{align}
    Therefore, consistency is violated when any combination of $(\lambda_{\theta0}- \lambda_{\theta1})$, $(e_{\theta0}-e_{\theta1})$ and $\err_{\theta}$ are non-positive and sufficiently large.
    The error correction step entails setting
    \[
    v_{\theta 0}.\counter^{\text{after}} \gets 0, 
    \]
    thus, \textit{reducing} the amount of error in $v_{\theta 0}.\counter^{\text{after}}$.
    Following our accounting approach, this can be achieved through rescaling $\Lambda/2$ by introducing new error terms $\lambda_{\theta 0}^{\prime}$ or $e_{\theta0}^{\prime}$, with
    \begin{align*}
        |\lambda_{\theta0}^{\prime}- \lambda_{\theta1}|
        &\leq 
        |\lambda_{\theta0}- \lambda_{\theta1}| \\
        |e_{\theta0}^{\prime}-e_{\theta1}|
        &\leq
        |e_{\theta0}-e_{\theta1}|,
    \end{align*}
    such that~\eqref{eqn:correction2} no longer holds.
    By inserting these values into ~\eqref{eqn:bound_conserr}, we reduce the bound on the consistency error.
    Therefore,~\eqref{eqn:miss_ubound} holds in all three cases.
\end{proof}

\section{Proof of Lemma~\ref{lem:step4}}
\label{sec:app_step_four}
The proof relies on the following result.
\begin{lemma}[\cite{he2023algorithmically}]
    For any finite multisets $U \subseteq V$ such that all elements in $V$ are from $\Omega$, one has
    \[
    W_1(\mu_{ U}, \mu_{V}) \leq \frac{|V \setminus U|}{|V|} \cdot {\diam}(\Omega).
    \]
    \label{lem:setminus_wass}
\end{lemma}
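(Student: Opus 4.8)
The plan is to bound $W_1(\mu_U,\mu_V)$ by exhibiting an explicit transport plan (coupling) between the two empirical measures and estimating its cost; since $W_1$ equals the infimum of transport costs, any such plan gives an upper bound, which is all we need. Write $u = |U|$ and $v = |V|$, so that $\mu_U$ places mass $1/u$ on each element of $U$ (counted with multiplicity) and $\mu_V$ places mass $1/v$ on each element of $V$. Because $U \subseteq V$ as multisets, every element of $U$ is also an element of $V$, so the first component of the plan is to leave mass $1/v$ of each point's $\mu_U$-mass fixed in place on the matching copy in $V$, incurring zero cost.

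The second component handles the leftover. After the zero-cost step, a residual mass of $1/u - 1/v = (v-u)/(uv) = |V\setminus U|/(uv)$ remains at each of the $u$ points of $U$, for a total residual of $|V\setminus U|/v$; on the $\mu_V$ side, the points of $V\setminus U$ still require mass $1/v$ each, again summing to $|V\setminus U|/v$. I would match these by any coupling supported on $U \times (V\setminus U)$ (e.g.\ splitting each residual proportionally). Since all points lie in $\Omega$, every unit of residual mass moves a distance at most $\diam(\Omega)$, so this component costs at most $\tfrac{|V\setminus U|}{v}\diam(\Omega)$. Adding the two components yields $W_1(\mu_U,\mu_V) \le \tfrac{|V\setminus U|}{|V|}\diam(\Omega)$, as claimed. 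As an alternative route one can argue through the dual in Equation~\eqref{eqn:wasserstein}: for any $1$-Lipschitz $f$, a short rearrangement gives $\int f\,d\mu_U - \int f\,d\mu_V = \tfrac{|V\setminus U|}{|V|}\bigl(\tfrac{1}{|U|}\sum_{x\in U} f(x) - \tfrac{1}{|V\setminus U|}\sum_{y\in V\setminus U} f(y)\bigr)$, i.e.\ $\tfrac{|V\setminus U|}{|V|}$ times the gap between two averages of $f$; since $|f(x)-f(y)| \le \rho(x,y) \le \diam(\Omega)$ for all $x,y \in \Omega$, that gap is at most $\diam(\Omega)$, and taking the supremum over $f$ finishes.

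There is no substantive obstacle here; the only points needing care are (i) treating $U$ and $V$ consistently as multisets so repeated elements are handled correctly in the definitions of $\mu_U,\mu_V$ and of the coupling, and (ii) verifying that the constructed plan has the correct marginals — which is immediate since the residual masses on the two sides both total $|V\setminus U|/|V|$. The degenerate cases $U = V$ (bound $0$) and $|U| = 0$ are consistent with the statement and require no separate treatment.
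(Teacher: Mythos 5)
Your argument is correct, but note that the paper itself does not prove this lemma: it is imported verbatim from He \emph{et al.}~\cite{he2023algorithmically} (the citation tag in the lemma statement is the giveaway) and used as a black box in the proof of Lemma~\ref{lem:step4}, so there is no in-paper proof to compare against.

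That said, your two arguments are both sound and standard. The primal coupling is clean: the "stay-put" component has the right marginal mass $1/|V|$ per point of $U$ precisely because $U \subseteq V$ as multisets, the residual mass on each side totals $|V\setminus U|/|V|$, and any way of pairing residuals moves mass at most $\diam(\Omega)$. The dual route is an equally tidy one-liner once you have the rearrangement
\begin{align*}
\int f\,d\mu_U - \int f\,d\mu_V
= \frac{|V\setminus U|}{|V|}\left(\frac{1}{|U|}\sum_{x\in U} f(x) - \frac{1}{|V\setminus U|}\sum_{y\in V\setminus U} f(y)\right),
\end{align*}
and the bound follows since the oscillation of a $1$-Lipschitz $f$ over $\Omega$ is at most $\diam(\Omega)$. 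One small caution worth keeping in mind for the dual version: if $U = V$ the factor $\tfrac{1}{|V\setminus U|}$ is $0/0$, but the prefactor $\tfrac{|V\setminus U|}{|V|}$ kills the whole expression, so the inequality still holds with bound $0$; you flagged the degenerate case already, which is the right instinct. Either proof would be an acceptable stand-alone justification if the paper were to prove the lemma inline rather than cite it.
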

\noindent
A main component of the proof entails quantifying the consistency errors and registering where they occur.
\begin{proof}[Proof of Lemma~\ref{lem:step4}]
    This proof is based on the proof of Theorem~$10$ in \cite{he2023algorithmically}.
    For root node $v_{\varnothing}\in \treehp$, let $m=v_{\varnothing}.\counter$ denote the number of ``points'' in $\treehp$ (this number might be a decimal), where a point refers to a unit of probability mass.
    Moving from $\treeprune$ to $\treehp$ can be done in two steps:
\begin{enumerate}
    \item Transform the $n$ point tree $\treeprune$ to the $m$ point tree $\treeprune^{\prime}$ by adding or removing $|n-m|$ points\footnote{This step introduces an additional miss. 
    We did not include this miss in the approximate pruning step (Lemma~\ref{lem:step3}) as it is evenly distributed among all descendants.}.
    \item Transform $\treeprune^{\prime}$ to $\treehp$ by recursively moving $\miss(v_{\theta})$ points between sibling nodes $v_{\theta 0}$ and $v_{\theta 1}$ and propagating each point down to a leaf node.
\end{enumerate}
With step 2, the total distance points move is at most
\begin{align}
    \sum_{l=0}^{r-1} \sum_{\node \in \mathcal{H}_l} \miss(v_{\theta})\cdot \diam(\subdomain) 
    &:= 
    C
    \label{eqn:C_def}
\end{align}
Therefore, since $|\treehp| =m$, it follows that
\begin{align}
    W_1(\treeprune^{\prime}, \treehp) &\leq \frac{C}{m}.
    \label{eqn:consistency_step}
\end{align}
Recall that the first step transforms the tree $\treeprune$ of size $n$ to the tree $\treeprune^{\prime}$ of size $m = n+\mathtt{sign}(\lambda_{\varnothing}) |\lambda_{\varnothing}|$, by adding or removing points.
For $\lambda_{\varnothing} \geq 0$, $\treeprune^{\prime}$ is created by \textit{adding} $\lambda_{\varnothing} $ points. 
Therefore, by Lemma~\ref{lem:setminus_wass}, it follows that
\begin{align*}
    W_1(\treeprune,\treeprune^{\prime}) &\leq \frac{\lambda_{\varnothing}}{m} \cdot \diam(\Omega).
\end{align*}
Combining this with (\ref{eqn:consistency_step}), we get
\begin{align*}
    W_1(\treeprune, \treehp)
    &\leq 
    \frac{\lambda_{\varnothing}\Gamma_{-1} + C}{m} 
    \leq 
    \frac{\lambda_{\varnothing}\Gamma_{-1} + C}{n},
\end{align*}
where $\Gamma_{-1}=\Gamma_0$.
Alternatively, for $\lambda_{\varnothing}>0$, $\treeprune^{\prime}$ is obtained from $\treeprune$ by removing a set $\mathcal{X}_0$ of $|n-m|$ (possibly fractional) points from $\treeprune$.
As previously stated, $\treehp$ is constructed from $\treeprune^{\prime}$ by moving points distance $C$.
Therefore, $\treehp\cup \mathcal{X}_0$ can also be constructed from $\treeprune^{\prime}$ by moving points distance $C$ (the $\mathcal{X}_0$ points remain unmoved).
Since $|\treeprune^{\prime}|=n$, it follows that
\begin{align*}
    W_1(\treeprune, \treehp\cup \mathcal{X}_0) &\leq \frac{C}{n}.
\end{align*}
Further, Lemma~\ref{lem:setminus_wass} gives:
\begin{align*}
    W_1(\treehp, \treehp\cup \mathcal{X}_0) &\leq \frac{|\mathcal{X}_0|}{|\treehp\cup \mathcal{X}_0|}\cdot \diam(\Omega) \leq \frac{|\lambda_{\varnothing}|\Gamma_{-1}}{n}.
\end{align*}
Combining the two bounds by the triangle inequality, we get
\begin{align*}
    W_1(\treeprune,\treehp) &\leq \frac{|\lambda_{\varnothing}|\Gamma_{-1} + C}{n}
\end{align*}
In other words, the bound holds in both cases. 
Recalling the definition of $C$ from \eqref{eqn:C_def}, 
\begin{align*}
    \mathbb{E}[W_1(\treeprune,\treehp)] 
    &\leq 
    \frac{1}{n}\left(\mathbb{E}[\lambda_{\varnothing}]\Gamma_{-1} + \sum_{l=0}^{L-1} \sum_{\node \in \mathcal{H}_l} \mathbb{E}[\miss(\node)]\cdot \diam(\subdomain) \right) \\
    &\leq 
    \frac{1}{n}\left(2\sqrt{2}\sigma_0^{-1}\Gamma_{-1} + \sum_{l=1}^{\lp} 2\sqrt{2}\sigma_l^{-1} \Gamma_{l-1} + \sum_{l=\lp+1}^{L} \left(\frac{2\sqrt{2}jk}{\sigma_l} + {||\tail_k^{l+1}||_1+ 2^{-j+1} n} \right) \gamma_{l-1}  \right) \\
    &=
    \frac{1}{n}\left( \sum_{l=0}^{\lp} \frac{2\sqrt{2} \Gamma_{l-1}}{\sigma_l} + \sum_{l=\lp+1}^{L} \left(\frac{2\sqrt{2}jk}{\sigma_l} + {||\tail_k^{l+1}||_1+ 2^{-j+1} n}  \right) \gamma_{l-1} \right),
\end{align*}
where the second inequality follows from~\eqref{eqn:bound_step_three}.
\end{proof}

\end{document}